\documentclass[sigconf]{acmart}

\usepackage{multirow}
\usepackage{enumitem}
\usepackage{algorithm}
\usepackage{algorithmic}
\usepackage{amsthm}
\usepackage{framed}
\usepackage{siunitx}
\newtheorem{proposition}{Proposition}
\newtheorem{lemma}{Lemma}

\definecolor{pboxframe}{gray}{0.25}
\definecolor{pboxback}{gray}{0.95}
\newcounter{protocolbox}
\renewcommand{\theprotocolbox}{\arabic{protocolbox}}
\makeatletter
\newenvironment{protocolbox}[1]{%
  \par\addvspace{\medskipamount}%
  \let\@vspace\@vspace@orig \let\@vspacer\@vspacer@orig
  \refstepcounter{protocolbox}%
  \noindent\colorbox{pboxframe}{%
    \parbox{\dimexpr\columnwidth-2\fboxsep\relax}{%
      \color{white}\bfseries\small Box~\theprotocolbox: #1}}%
  \par\nointerlineskip
  \MakeFramed{\advance\hsize-\width\FrameRestore}%
  \small\ignorespaces
}{%
  \endMakeFramed\par\addvspace{\medskipamount}%
}
\makeatother

\newcommand{\Csub}{C_{I}^{\mathrm{sub}}}
\newcommand{\Wtilde}{\widetilde{W}}
\newcommand{\Eitem}{E_{\mathrm{item}}}

\newcommand{\Xitem}{X_{\mathrm{item}}}
\newcommand{\Xuser}{X_{\mathrm{user}}}
\newcommand{\Eitemhat}{\widehat{E}_{\mathrm{item}}}
\newcommand{\Euserhat}{\widehat{E}_{\mathrm{user}}}

\AtBeginDocument{%
  }

\copyrightyear{2026}
\acmYear{2026}
\setcopyright{cc}
\setcctype{by}
\acmConference[CIKM '26]{Proceedings of the 35th ACM International Conference on Information and Knowledge Management}{November 07--11, 2026}{Rome, Italy}
\acmBooktitle{Proceedings of the 35th ACM International Conference on Information and Knowledge Management (CIKM '26), November 07--11, 2026, Rome, Italy}
\acmDOI{10.1145/3799682.3840871}
\acmISBN{979-8-4007-2539-5/2026/11}
\begin{document}

\title{The Edge Spectrum of Choice-Derived Item Graphs: Strong and\\ Weak Edges Encode Different Relations in Collaborative Filtering}

\author{Keigo Sakurai}
\orcid{0009-0008-3747-0491}
\affiliation{%
  \institution{Hokkaido University}
  \city{Sapporo}
  \country{Japan}}
\email{sakurai@lmd.ist.hokudai.ac.jp}

\author{Takahiro Ogawa}
\orcid{0000-0003-2644-2715}
\affiliation{%
 \institution{Hokkaido University}
 \city{Sapporo}
  \country{Japan}}
 \email{ogawa@lmd.ist.hokudai.ac.jp}

\author{Miki Haseyama}
\orcid{0000-0003-1496-1761}
\affiliation{%
  \institution{Hokkaido University}
  \city{Sapporo}
  \country{Japan}}
  \email{mhaseyama@lmd.ist.hokudai.ac.jp}

\begin{abstract}
Graph collaborative filtering relies on item--item graphs whose edges are used for positive smoothing, under the implicit assumption that stronger edges encode \emph{more of the same} relation as weaker ones.
We show that this assumption fails for a practically important class of graphs: those whose edge weights come from a choice model.
On such graphs, strong and weak edges encode \emph{qualitatively different} relations, which we call an \emph{edge spectrum}.
Specifically, strong edges concentrate on the in-slate competitors of clicked items, exactly the pairs that the within-slate ranking gradient pushes apart, while weak edges do not. We formalize this as a sign mismatch between the smoothing operator and the ranking gradient, and prove that co-click graphs cannot exhibit the same misalignment by construction.
This diagnosis explains three empirical observations on MIND and EB-NeRD: (i) drop-in choice-derived operators do not beat co-click, despite indexing structurally distinct neighborhoods; (ii) uniform scalar fixes (sign flip, in-slate margin loss) fail predictably, because the misalignment lives in the graph, not in the loss; (iii) only edge-magnitude-aware operators, with the regime boundary located by the diagnosis rather than by tuning, recover the predicted ordering. The neighbor cutoff $k$ is therefore a \emph{semantic switch}, not a sparsification hyperparameter.
Our claim concerns \emph{which interventions fail or succeed and why}, not absolute headline gains, which the diagnosis itself predicts to be small under the attenuated propagation channel we observe. We turn the diagnosis into a reusable protocol practitioners can run before deploying any choice-derived item-side operator.
Code: \url{https://github.com/kyomusso/Edge-Spectrum-in-CF}.
\end{abstract}

\begin{CCSXML}
<ccs2012>
<concept>
<concept_id>10002951.10003317.10003347.10003350</concept_id>
<concept_desc>Information systems~Recommender systems</concept_desc>
<concept_significance>500</concept_significance>
</concept>
</ccs2012>
\end{CCSXML}

\ccsdesc[500]{Information systems~Recommender systems}

\keywords{Recommender System; Graph Collaborative Filtering; Impression-aware Recommendation; Choice Modeling}

\maketitle

\section{Introduction}
\label{sec:intro}
Item--item graphs are everywhere in graph collaborative filtering. LightGCN~\cite{model/he2020lightgcn}'s even-hop terms implicitly use one; ItemKNN~\cite{model/sarwar2001item}, SLIM~\cite{model/ning2011slim}, EASE~\cite{model/steck2019ease}, and GFCF~\cite{model/shen2021gfcf} use one explicitly; methods that augment recommendation with knowledge graphs~\cite{model/wang2019kgat}, item content~\cite{model/he2016vbpr}, social ties~\cite{model/fan2019graph}, or self-supervised augmentation~\cite{model/wu2021sgl} all use one. Across this design space, a single assumption is shared: a stronger edge means \emph{more of the same} relation as a weaker edge, just with higher confidence. Under this view, the neighbor cutoff $k$ is a sparsification knob: a choice of how much of that one fixed relation to retain.

This paper shows that the assumption fails for an important class of graphs: those whose edge weights come from a choice model~\cite{model/lcm4rec,model/ccf,method/aouad2023assortment}, the natural way to encode in-slate competition between items. On such graphs, the edge weight $W_{ij}$ does more than measure confidence; it carries semantic content. We call the high-weight head of the distribution \emph{strong edges}: pairs of items that consistently competed for the same click within the same slate. We call the low-weight tail \emph{weak edges}: pairs that merely co-appeared occasionally without strong competitive structure. The central observation of this paper is that strong edges and weak edges encode \emph{qualitatively different relations}, so two values of the same neighbor cutoff $k$ can index relations of opposite sign. We call this an \emph{edge spectrum}.

\begin{figure}[t]
  \centering
  \includegraphics[width=\columnwidth]{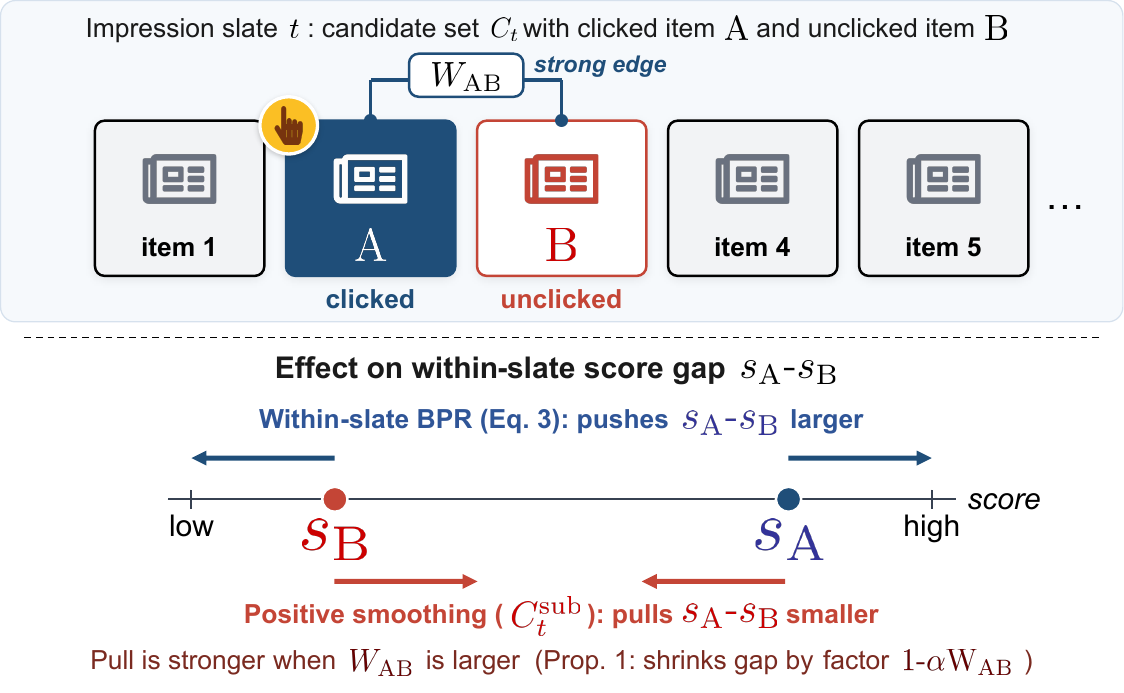}
  \caption{The edge spectrum: choice-derived substitution edges connect a clicked item to its strongest in-slate competitors. Positive smoothing over such edges raises the competitor's score; the within-slate ranking gradient (Eq.~\ref{eq:bpr}) pushes in the opposite direction (Proposition~\ref{prop:sign}). Strong edges and weak edges encode different relations.}
  \label{fig:concept}
  \Description{A schematic showing a clicked item connected by a strong choice-derived substitution edge to its strongest in-slate competitor, with the positive smoothing operator and the within-slate ranking gradient acting on that pair in opposite directions, alongside a weak edge that does not show this behavior.}
  \vspace{-0.4cm}
\end{figure}

Figure~\ref{fig:concept} shows the mechanism. Consider a clicked item $A$ and an unclicked item $B$ in the same impression slate. The within-slate ranking gradient pushes their scores \emph{apart}; this is the entire point of impression-aware training. Now suppose $A$ and $B$ form a \emph{strong} edge in the choice-derived graph: by construction, this happens precisely \emph{because} they are competitors for the same click. Positive smoothing over that strong edge then pulls their scores \emph{together}, in direct opposition to the ranking gradient. \emph{Weak} edges do not exhibit this property: they neither concentrate on in-slate competitor pairs nor produce a comparable pull. We formalize the strong-edge case as a sign mismatch (Proposition~\ref{prop:sign}). Co-click graphs cannot exhibit the same misalignment by construction: pairs with one clicked and one unclicked item contribute \emph{zero} weight to them (Lemma~\ref{lemma:coclick}).

We make three claims, each tested on MIND and EB-NeRD:
\begin{itemize}[leftmargin=*,topsep=2pt,itemsep=1pt]
  \item \textbf{Drop-in fails (\S\ref{sec:main-comparison}).} Choice-derived operators do not improve over co-click, despite indexing largely disjoint neighborhoods.
  \item \textbf{Scalar fixes fail predictably (\S\ref{subsec:scalar-remedies}).} Sign-flipping the operator or adding a competition-aware margin loss does not work: the misalignment lives in the graph, not in the loss.
  \item \textbf{Edge-magnitude-aware operators succeed (\S\ref{subsec:edge-partitioned}).} Treating the strong head and the weak tail differently, using a regime boundary located by the diagnosis rather than by tuning, recovers the predicted ordering.
\end{itemize}
The neighbor cutoff $k$ on these graphs is therefore a \emph{semantic switch}, not a sparsification hyperparameter, and we turn the diagnosis into a reusable protocol practitioners can run before deployment (\S\ref{subsec:diagnostic-protocol}).

\paragraph{Scope of the claim.}
We do not claim choice modeling is unhelpful, nor that no choice-aware operator can outperform co-click. We claim that the natural \emph{drop-in} embodiment is structurally misaligned with within-slate ranking, and that this misalignment is diagnosable in advance. Our absolute improvements over co-click are modest, a property the diagnosis itself predicts under the attenuated propagation channel we observe (\S\ref{subsec:mixer-collapse}). The evidence for our claim is the \emph{pattern} of which interventions fail or succeed, not the size of any single headline gain.

\paragraph{Contribution.}
\begin{itemize}[leftmargin=*,topsep=2pt,itemsep=1pt]
  \item We \textbf{identify and characterize} an \emph{edge spectrum} on choice-derived item graphs, on which strong and weak edges encode qualitatively different relations (\S\ref{sec:why}).
  \item We \textbf{formalize the mechanism} as a sign mismatch (Prop.~\ref{prop:sign}), \textbf{prove} that co-click cannot satisfy the same condition (Lem.~\ref{lemma:coclick}), and confirm both predictions empirically.
  \item We provide the \textbf{first controlled comparison} of four item-side propagation operators on MIND and EB-NeRD, showing drop-in choice-derived operators do not outperform co-click despite indexing structurally distinct neighborhoods (\S\ref{sec:main-comparison}).
  \item We develop a \textbf{reusable diagnostic protocol} (\S\ref{subsec:diagnostic-protocol}) and pre-register an \textbf{edge-magnitude-aware operator} whose regime boundary is located by the diagnosis rather than by tuning, matching its predictions on both datasets (\S\ref{subsec:edge-partitioned}).
\end{itemize}

\section{Related Work}
\label{sec:related}

\subsection{Graph CF and item-side smoothing}
Graph collaborative filtering (CF) for implicit feedback originates with NGCF~\cite{model/wang2019neural} and was simplified by LightGCN~\cite{model/he2020lightgcn}; subsequent work refines layer combination~\cite{model/lightgcnpp}, approximates infinite-layer propagation~\cite{model/mao2021ultragcn}, adds contrastive perturbation~\cite{model/yu2022simgcl}, or reformulates propagation as a spectral filter~\cite{model/shen2021gfcf}. Across this family, item-side propagation is positive smoothing over a co-occurrence-derived item graph, either explicitly~\cite{model/sarwar2001item} or implicitly through $S^\top S$. Work that varies item-graph construction substitutes the \emph{source} of edges (knowledge graphs (KGs)~\cite{model/wang2019kgat,model/wang2019kgnnls}, content~\cite{model/he2016vbpr,model/zhang2016cke}, social ties~\cite{model/fan2019graph}) while preserving positive smoothing. Signed graph neural networks (GNNs)~\cite{model/derr2018signed,model/huang2019signed} use explicit edge signs, but only in settings where the sign is directly observable (e.g., trust networks). Our contribution within this thread is to identify a class of item edges, those derived from choice models, on which the appropriate sign is \emph{not} observable but is determined by interaction with the ranking objective.

\subsection{Choice modeling in recommendation: the expectation we test}
The multinomial logit (MNL) choice model~\cite{theory/mcfadden1973conditional} has a long history in econometrics for measuring substitution via diversion ratios~\cite{book/train2009discrete,theory/berry1994estimating}, and the principle that recommendation should account for local competition dates at least to Collaborative Competitive Filtering~\cite{model/ccf}. Within this line, MNL and its extensions have been used as ranking models~\cite{method/aouad2023assortment}, slate ranking losses~\cite{method/pang2020setrank}, and components in choice-aware sequential models~\cite{model/kang2018self,model/lcm4rec}. The operative position across these works is that choice-aware item relations encode a richer signal than co-occurrence ones; the natural translation into graph CF, namely using a choice-derived item--item graph as the propagation source, has not been directly compared against co-occurrence alternatives under controlled training, and we provide that comparison. Two Is Better Than One~\cite{model/kvernadze2022two} distinguishes ``similar'' from ``complementary'' product relations via separate embeddings; our distinction operates one level earlier, on the propagation graph itself, and is driven by the interaction between graph weights and the ranking gradient rather than by an a priori relation taxonomy.

\subsection{Rigorous re-evaluation and impression-aware recommendation}
Our methodology follows works that re-evaluate established methods under controlled protocols: Dacrema et al.~\cite{analysis/dacrema2019are}, Rendle et al.~\cite{analysis/rendle2020neural}, 
Sato et al.~\cite{analysis/sato2022reevaluating}, and Iana et al.~\cite{analysis/iana2024simplifying}. We apply the same frame to choice-derived item-side operators: freeze the baseline pack before evaluating the proposed construction, and trace any negative result to a structural source. News recommendation provides the primary empirical testbed for impression-aware learning, with MIND~\cite{dataset/wu2020mind} and EB-NeRD~\cite{dataset/kruse2024ebnerd} as principal benchmarks supporting session-aware methods~\cite{model/wu2019nrms,model/an2019lstur,model/wu2019naml,model/li2022miner,model/qi2022fum}. Displayed-but-unclicked items are variously used as in-batch negatives~\cite{model/wu2019npa,model/wu2019nrms}; slate-aware rankers~\cite{method/sunehag2015slate,method/jiang2019listcvae} and MNL slate-choice rankers~\cite{method/pang2020setrank,method/aouad2023assortment} model within-slate competition at the level of the loss or policy, but, to our knowledge, no prior work has tested whether the item--item edges used for graph propagation should themselves be derived from the slate-level choice context.

\section{Setup: Item-Side Operators in Graph CF}
\label{sec:setup}

\noindent\textit{\textbf{Take-away.} We isolate the item-graph axis by swapping LightGCN's implicit operator $S^\top S$ for one of four candidates, holding all else fixed.}

\vspace{0.5em}
\noindent We take LightGCN's propagation polynomial, identify the item-side even-hop operator within it, and swap that operator for one of four candidates. All other components (bipartite structure, layer combination, scoring rule, training objective, and hyperparameters) are held fixed. Table~\ref{tab:notation} summarizes the notation used throughout the paper.

\begin{table}[t]
\caption{Notation used throughout the paper.}
\vspace{-0.2cm}
\label{tab:notation}
\small
\setlength{\tabcolsep}{4pt}
\begin{tabular}{@{}lp{0.66\columnwidth}@{}}
\toprule
\textbf{Symbol} & \textbf{Meaning} \\
\midrule
\multicolumn{2}{@{}l}{\emph{Data and impressions}} \\
$\mathcal{U}, \mathcal{I}$       & user set, item set \\
$t$                              & an impression (a single user--slate interaction) \\
$C_t$                            & candidate slate at impression $t$ \\
$Y_t \subseteq C_t$              & set of clicked items at impression $t$ \\
$u_t$                            & user at impression $t$ \\
\midrule
\multicolumn{2}{@{}l}{\emph{Graph operators}} \\
$S$                              & symmetrically-normalized user--item interaction matrix \\
$C_I$                            & item-side propagation operator (one of four candidates) \\
$S^\top S$                       & LightGCN's implicit item-side operator \\
$W_{ij}^{\mathrm{cc}}$           & co-click weight: \# users who clicked both $i$ and $j$ \\
$W_{ij}^{\mathrm{coexp}}$        & co-exposure weight: \# impressions where $i, j$ co-appear \\
$W_{ij}^{\mathrm{sub}}$          & symmetrized MNL diversion weight (choice-derived) \\
\midrule
\multicolumn{2}{@{}l}{\emph{Choice model}} \\
$P(i \mid u, C)$                 & MNL probability of selecting $i$ from slate $C$ \\
$\Delta_{j \to i}(u, C)$         & MNL diversion weight from $j$ to $i$ (Eq.~\ref{eq:diversion}) \\
$\Wtilde_{ij}$                    & symmetrized diversion weight before pruning \\
\midrule
\multicolumn{2}{@{}l}{\emph{Edge spectrum and diagnostics}} \\
strong edges                     & high-weight head of $W^{\mathrm{sub}}$; concentrated on in-slate competitors \\
weak edges                       & low-weight tail of $W^{\mathrm{sub}}$; not concentrated on competitors \\
$N_o^{(k)}(i)$                   & top-$k$ neighbors of item $i$ under operator $o$ \\
$\Delta_o(k)$                    & within-slate lift at cutoff $k$ for operator $o$ (Eq.~\ref{eq:lift}) \\
$K_{\mathrm{strong}}$            & cutoff that separates strong from weak edges \\
\bottomrule
\end{tabular}
\vspace{-0.5cm}
\end{table}

\subsection{LightGCN's Implicit Item-Item Operator and Four Candidates}
\label{subsec:lightgcn-implicit}

Let $R \in \mathbb{R}^{|\mathcal{U}| \times |\mathcal{I}|}$ denote the user--item interaction matrix derived from impressions, with entries
\begin{equation}
R_{ui} = \log(1 + \mathrm{expose}_{ui}) \cdot \frac{\mathrm{click}_{ui} + a}{\mathrm{expose}_{ui} + a + b}
\label{eq:Rui}
\end{equation}
for fixed smoothing constants $a, b > 0$. The first factor handles the heavy-tailed exposure distribution and the second is a Beta-$(a, b)$-smoothed click rate; we use $a = 1$, $b = 10$ throughout (prior click probability $a/(a+b) \approx 9\%$, matching the empirical news CTR regime). Symmetric degree normalization gives $S = D_U^{-1/2} R\, D_I^{-1/2}$ and the bipartite operator $A = \bigl(\begin{smallmatrix} 0 & S \\ S^\top & 0 \end{smallmatrix}\bigr)$. When the propagation polynomial is expanded, item-side even-hop terms take the form $(S^\top S)^k$: the matrix $S^\top S$ is itself a normalized item--item graph and is the implicit item--item operator in LightGCN-style propagation. We parameterize item-side propagation as $(C_I)^k$ for a symmetric, non-negative, degree-normalized, top-$K$-pruned operator $C_I$; substituting $C_I = S^\top S$ recovers LightGCN's polynomial at the operator level (Appendix~\ref{app:parity}). It does \emph{not} entail numerical equivalence with the locked LightGCN baseline because the training-time framework differs; Section~\ref{subsec:finding3-framework-gap} reports a framework-axis ablation. Our setup therefore enables apples-to-apples comparison \emph{across operator choices}, not against absolute LightGCN-family performance; the former is the comparison the spectrum analysis requires. We use uniform layer-combination weights throughout; a learnable softmax mixer collapses onto the un-propagated user term (Section~\ref{subsec:mixer-collapse}).

We compare four constructions of $C_I$ in order of increasing choice-theoretic content. \emph{$S^\top S$} is LightGCN's implicit operator. \emph{Co-click} is the explicit item--item graph $W_{ij}^{\mathrm{cc}} = \#\{u : \mathrm{click}_{ui} = \mathrm{click}_{uj} = 1\}$. \emph{Co-exposure} has $W_{ij}^{\mathrm{coexp}} = \#\{t : i, j \in C_t\}$, encoding platform-level grouping but ignoring within-slate choice. \emph{Choice-derived substitution} is built from MNL diversion weights (Section~\ref{subsec:substitution-build}) and is the operator the choice-modeling literature~\cite{model/lcm4rec,model/ccf,method/aouad2023assortment} treats as a richer alternative to co-occurrence. All four undergo identical symmetric degree normalization and top-$K$ pruning ($K = 50$) and plug into the same item-side slot without any other code change.

\subsection{Building the Substitution Operator}
\label{subsec:substitution-build}

We fit a factorized MNL choice model on impression data with utility $v(u, i) = p_u^\top q_i + b_i$ and slate-level choice probability $P(i \mid u, C) = \exp v(u, i) / \sum_{j \in C} \exp v(u, j)$, by minimizing slate-level cross-entropy (held-out negative log-likelihood, NLL: $3.22$ on MIND, $1.92$ on EB-NeRD). The MNL diversion weights are
\begin{equation}
\Delta_{j \to i}(u, C) = \frac{P(i \mid u, C)\, P(j \mid u, C)}{1 - P(j \mid u, C)}.
\label{eq:diversion}
\end{equation}
We aggregate across impressions in which $i, j$ co-appear, symmetrize $\Wtilde_{ij} = \tfrac{1}{2}(\Delta_{j \to i} + \Delta_{i \to j})$, and apply top-$K$ pruning and symmetric degree normalization to obtain $\Csub = D^{-1/2} \Wtilde_{\mathrm{top}\text{-}K} D^{-1/2}$ (construction statistics in Table~\ref{tab:graph-stats}). The MNL model is used only to construct $\Csub$; its embeddings are not shared with the propagation framework.

\subsection{Controlled Comparison Protocol}
\label{subsec:protocol}

We freeze three components before any operator variant is run, following~\cite{analysis/dacrema2019are,analysis/rendle2020neural}: (i) data and impression-level splits; (ii) the evaluator computing normalized discounted cumulative gain at 10 (NDCG@10), mean reciprocal rank (MRR), Hit@1, and group AUC (gAUC) on the official validation split; (iii) the baseline pack (LightGCN, LightGCN++, BPR-MF, GFCF, ItemKNN variants, MNL-only slate-choice ranker). After the lock, no baseline hyperparameter is altered. Training minimizes within-impression Bayesian personalized ranking (BPR) loss
\begin{equation}
\mathcal{L}_{t,ij} = -\log \sigma\bigl(s(u_t, i) - s(u_t, j)\bigr)
\label{eq:bpr}
\end{equation}
with in-slate negatives, identical across all four operator variants and three random seeds, where $s(u_t, i) = \langle \Euserhat[u_t], \Eitemhat[i] \rangle$. Code and trained MNL choice models will be released upon acceptance; baseline settings are in Appendices~\ref{app:mixer}--\ref{app:remedies}.

\section{Drop-In Operators: A Narrow Band on the Headline Metric}
\label{sec:main-comparison}
\noindent\textit{\textbf{Take-away.} Under controlled training, the four item-side operators land within a narrow band ($0.005$--$0.011$ NDCG@10); the drop-in choice-derived operator does not beat co-click.}

\vspace{0.5em}
\noindent This section establishes three properties about drop-in operator swaps: (i) the four operators land within a narrow band despite indexing structurally different relations; (ii) the drop-in choice-derived substitution operator does not outperform co-click; (iii) the propagation channel is substantially attenuated in our setup, which constrains the absolute scale of any operator-level intervention but does not invalidate the structural analysis of \S\ref{sec:why}. Framework-axis robustness is addressed in Section~\ref{subsec:finding3-framework-gap} and Appendix~\ref{app:framework-axis}.

\subsection{Finding 1: Operator Choice Matters, But Within a Narrow Band}
\label{subsec:finding1-narrow-band}

Table~\ref{tab:operator-ablation} reports NDCG@10 on the internal development split over three seeds; Table~\ref{tab:main-results} (Section~\ref{subsec:finding3-framework-gap}) reports the official validation split used for baseline comparison. Operator-relative orderings are consistent across the two splits in the sense that substitution does not strictly outperform co-click in any of the four comparisons (two splits $\times$ two datasets); absolute values differ as expected between splits. On MIND the four operators span $0.4036$--$0.4088$ (spread $0.0052$); on EB-NeRD the spread is $0.0083$. The spread is small in absolute magnitude (on the order of seed standard deviation), but the ordering is consistent across seeds within each dataset and across datasets at the top. The narrow band itself is worth flagging: item-side operator construction, which has been the focus of substantial graph CF design effort (knowledge-graph augmentation, signed graphs, self-supervised constructions), moves headline NDCG@10 by less than $0.01$ on these datasets, holding everything else constant. Where it does move it, the simpler operator wins.

\begin{table}[t]
\caption{Operator ablation under uniform mixer (internal dev split, mean $\pm$ std over 3 seeds, NDCG@10). \textbf{Co-click is the strongest operator on both datasets}, despite substitution being constructed from a richer choice-theoretic signal. The four operators span a narrow band ($\leq 0.01$ NDCG@10), with substitution placing at or near the bottom: the empirical anchor for the structural analysis of \S\ref{sec:why}.}
\label{tab:operator-ablation}
\small
\setlength{\tabcolsep}{6pt}
\begin{tabular}{lcc}
\toprule
\textbf{Operator} & \textbf{MIND} & \textbf{EB-NeRD} \\
\midrule
co-click           & $\mathbf{0.4088 \pm 0.0011}$ & $\mathbf{0.5985 \pm 0.0010}$ \\
$S^\top S$         & $0.4052 \pm 0.0006$ & $0.5953 \pm 0.0021$ \\
co-exposure        & $0.4054 \pm 0.0008$ & $0.5902 \pm 0.0011$ \\
substitution       & $0.4036 \pm 0.0019$ & $0.5905 \pm 0.0009$ \\
\midrule
spread (best$-$worst) & $0.0052$ & $0.0083$ \\
\bottomrule
\end{tabular}
\vspace{-0.3cm}
\end{table}

\subsection{Finding 2: The Drop-In Choice-Derived Operator Does Not Outperform Co-Click}
\label{subsec:finding2-coclick-suffices}

Co-click is the best operator on both datasets on the internal development split. The substitution operator, constructed from a richer choice-theoretic signal by fitting an MNL choice model and extracting diversion weights, places at or near the bottom. The pattern is reproducible at the dev-split level: across three seeds, co-click beats substitution by $0.0029$--$0.0067$ NDCG@10 on MIND (mean $0.0052$) and by $0.0071$--$0.0089$ on EB-NeRD (mean $0.0080$). On the official validation split (Table~\ref{tab:main-results}), substitution and co-click are statistically indistinguishable on EB-NeRD ($0.5984$ vs.\ $0.5971$, within seed standard deviation) while co-click leads on MIND; across the four comparisons (two splits $\times$ two datasets), substitution does not strictly outperform co-click in any. \emph{The richer signal does not translate into improved drop-in propagation.} This is the empirical anchor for the structural analysis of Section~\ref{sec:why}: the analysis must explain not only why substitution fails to surpass co-click but also why the failure persists despite the two graphs indexing largely disjoint neighborhoods (Section~\ref{subsec:graph-distinct}).

\subsection{The Two Graphs Are Structurally Distinct}
\label{subsec:graph-distinct}

A skeptical reader might ask whether the substitution graph inadvertently rediscovers co-click in elaborated form, in which case Finding~2 would be a trivial consequence of redundancy. Table~\ref{tab:graph-stats} reports overlap between the substitution graph and each of the three co-occurrence-based operators. Against co-click, mean Jaccard@50 is $0.014$ on MIND and $0.044$ on EB-NeRD; weighted cosine is $0.026$ and $0.139$. The substitution graph and the co-click graph index largely disjoint neighborhoods. Against $S^\top S$ the overlap is even smaller. The partial exception is co-exposure (Jaccard@50 $0.20$--$0.22$), reflecting that both graphs draw signal from items that appear together in slates; but substitution weights co-appearances by their MNL choice probabilities while co-exposure weights them equally, so even this overlap is well below identity. Finding~2 is therefore unlikely to be a rediscovery artifact: the two graphs index different relations, yet co-click is the better propagation source. The structural reason is the subject of Section~\ref{sec:why}.

\begin{table}[t]
\caption{Structural overlap between the substitution graph and three co-occurrence-based operators. \textbf{The substitution graph indexes a largely disjoint neighborhood from both co-click and $S^\top S$} (Jaccard@50 $\leq 0.05$ on both datasets), confirming that the failure of substitution in Table~\ref{tab:operator-ablation} is not a rediscovery artifact. The partial exception is co-exposure (Jaccard@50 $\approx 0.2$), since both graphs draw signal from in-slate co-appearance, but substitution weights these by MNL choice probabilities while co-exposure weights equally.}
\label{tab:graph-stats}
\small
\setlength{\tabcolsep}{6pt}
\begin{tabular}{lcc}
\toprule
\textbf{Compared to substitution graph} & \textbf{MIND} & \textbf{EB-NeRD} \\
\midrule
\multicolumn{3}{l}{\emph{vs.\ co-click}} \\
Mean Jaccard@50               & 0.014 & 0.044 \\
Weighted cosine               & 0.026 & 0.139 \\
\midrule
\multicolumn{3}{l}{\emph{vs.\ $S^\top S$}} \\
Mean Jaccard@50               & 0.014 & 0.053 \\
Weighted cosine               & 0.009 & 0.062 \\
\midrule
\multicolumn{3}{l}{\emph{vs.\ co-exposure}} \\
Mean Jaccard@50               & 0.203 & 0.216 \\
Weighted cosine               & 0.318 & 0.448 \\
\midrule
\multicolumn{3}{l}{\emph{Substitution graph construction (reference)}} \\
Choice model dev NLL          & 3.22  & 1.92  \\
Retained edges (top-$K$ pruned)& $1{,}268{,}104$ & $160{,}968$ \\
\bottomrule
\end{tabular}
\vspace{-0.3cm}
\end{table}

\subsection{Finding 3: The Gap to Locked Baselines Is Outside the Item-Side Operator Axis}
\label{subsec:finding3-framework-gap}

Table~\ref{tab:main-results} shows the full baseline pack against the four operator variants. The best variant trails LightGCN++ by $0.083$ NDCG@10 on MIND and $0.054$ on EB-NeRD, an order of magnitude larger than the operator spread ($0.005$--$0.008$). Decomposing this gap (Appendix~\ref{app:framework-axis}, Table~\ref{tab:framework-ablation}), the LightGCN-aligned axes (binary $R$, no pruning) jointly explain less than $0.002$ NDCG@10; the residual $\approx 0.04$ is attributable to training-time hyperparameter and architecture choices outside the operator axis. \emph{The improvement LightGCN++ delivers over LightGCN is therefore not located in item-graph construction}; we return to this in Section~\ref{sec:discussion}.

\begin{table*}[t]
\caption{Validation-split comparison. \textbf{Bold} marks the overall best per dataset; \underline{underline} marks the best within ``Ours''. The four ``Ours'' variants span a narrow band ($\leq 0.01$ NDCG@10), with substitution never strictly outperforming co-click; the gap to LightGCN++ is an order of magnitude larger and lies outside the item-side operator axis (Section~\ref{subsec:finding3-framework-gap}, Appendix~\ref{app:framework-axis}).}
\label{tab:main-results}
\small
\setlength{\tabcolsep}{4pt}
\begin{tabular}{llcccccccc}
\toprule
& & \multicolumn{4}{c}{\textbf{MIND}} & \multicolumn{4}{c}{\textbf{EB-NeRD}} \\
\cmidrule(lr){3-6}\cmidrule(lr){7-10}
\textbf{Group} & \textbf{Method} & NDCG@10 & MRR & Hit@1 & gAUC & NDCG@10 & MRR & Hit@1 & gAUC \\
\midrule
Sanity            & MostPop                                                          & 0.3918 & 0.3304 & 0.1570 & 0.4977 & 0.6353 & 0.5316 & 0.3150 & 0.6339 \\
Sanity            & Random                                                           & 0.3792 & 0.3164 & 0.1432 & 0.4946 & 0.5463 & 0.4336 & 0.2038 & 0.4950 \\
\addlinespace[2pt]
Non-graph CF      & BPR-MF                                                           & 0.4575 & 0.3872 & 0.2076 & 0.6042 & 0.6321 & 0.5317 & 0.3204 & 0.6194 \\
\addlinespace[2pt]
Graph CF          & GFCF                                                             & 0.3910 & 0.3253 & 0.1480 & 0.4887 & 0.5974 & 0.4867 & 0.2591 & 0.5806 \\
Graph CF          & LightGCN                                                         & 0.4344 & 0.3675 & 0.1863 & 0.5663 & 0.6358 & 0.5357 & 0.3200 & 0.6246 \\
Graph CF          & LightGCN++                                                       & \textbf{0.4754} & \textbf{0.4083} & \textbf{0.2280} & \textbf{0.6174} & \textbf{0.6523} & \textbf{0.5545} & \textbf{0.3454} & \textbf{0.6538} \\
\addlinespace[2pt]
Co-occurrence     & ItemKNN-CoClick                                                  & 0.3907 & 0.3271 & 0.1440 & 0.5053 & 0.5957 & 0.4874 & 0.2606 & 0.5710 \\
Co-occurrence     & ItemKNN-CoExposure                                               & 0.3668 & 0.2997 & 0.1243 & 0.4870 & 0.5452 & 0.4297 & 0.1968 & 0.4977 \\
\addlinespace[2pt]
Choice-only       & MNL-Slate                                                        & 0.4575 & 0.3888 & 0.2092 & 0.6074 & 0.6297 & 0.5301 & 0.3188 & 0.6156 \\
\specialrule{0.7pt}{2pt}{2pt}
Ours              & co-click operator                                                & \underline{0.3926} & \underline{0.3228} & \underline{0.1426} & \underline{0.5291} & 0.5971 & 0.4933 & 0.2734 & \underline{0.6005} \\
Ours              & co-exposure operator                                             & 0.3921 & 0.3214 & 0.1410 & 0.5251 & 0.5964 & 0.4932 & 0.2768 & 0.5987 \\
Ours              & substitution operator                                            & 0.3857 & 0.3147 & 0.1334 & 0.5219 & \underline{0.5984} & \underline{0.4957} & \underline{0.2796} & 0.6002 \\
Ours              & $S^\top S$ operator                                              & 0.3917 & 0.3216 & 0.1424 & 0.5249 & 0.5962 & 0.4920 & 0.2723 & 0.5978 \\
\bottomrule
\end{tabular}
\vspace{-0.2cm}
\end{table*}

\subsection{Finding 4: The Operator Ordering Is Robust to Framework Knobs}
\label{subsec:finding4-lightgcn-aligned}

Re-running under an alternative LightGCN-compatible framework (binary $R$, random-negative BPR, common top-$K{=}50$ pruning), Table~\ref{tab:operator-ablation-lightgcn-aligned} shows the hierarchy from Table~\ref{tab:operator-ablation} is preserved: \{co-click, $S^\top S$\} statistically indistinguishable at the top, substitution and co-exposure below. The narrow band ($\leq 0.012$) and substitution's inability to surpass co-click persist under framework changes.

\begin{table}[t]
\caption{Four-operator comparison under an alternative LightGCN-aligned framework (binary $R$, random-negative BPR, common top-$K{=}50$ pruning), NDCG@10 on validation. \textbf{The hierarchy from Table~\ref{tab:operator-ablation} is preserved}: \{co-click, $S^\top S$\} statistically indistinguishable at the top, substitution and co-exposure below. The narrow band ($\leq 0.012$) and substitution's inability to surpass co-click persist under framework changes. $S^\top S$ and co-click use 3 seeds; others use seed=1.}
\label{tab:operator-ablation-lightgcn-aligned}
\small
\setlength{\tabcolsep}{6pt}
\begin{tabular}{lcc}
\toprule
\textbf{Operator} & \textbf{MIND} & \textbf{EB-NeRD} \\
\midrule
co-click           & $0.3744 \pm 0.0107$ & $\mathbf{0.5283 \pm 0.0024}$ \\
$S^\top S$         & $\mathbf{0.3750 \pm 0.0109}$ & $0.5276 \pm 0.0015$ \\
substitution       & $0.3653$            & $0.5200$ \\
co-exposure        & $0.3636$            & $0.5189$ \\
\midrule
spread (best$-$worst) & $0.0114$ & $0.0094$ \\
\bottomrule
\end{tabular}
\vspace{-0.3cm}
\end{table}

\subsection{Why the Band Is So Narrow: A Mixer Collapse Observation}
\label{subsec:mixer-collapse}

Why is the operator-variation band so narrow ($0.005$--$0.008$ NDCG@10) given that the four operators index structurally different relations? The narrow band reflects, in part, a property of the propagation channel itself in our training setup. \emph{This observation constrains the absolute scale of operator-level interventions but does not undermine the spectrum analysis, for reasons we make explicit below.}

We replace the uniform layer-combination weights with a learnable softmax mixer $\alpha_k = \exp(\theta_k) / \sum_{k'} \exp(\theta_{k'})$ trained jointly with the embeddings. Under within-impression BPR, the mixer collapses onto the un-propagated user term: even-parity mass exceeds $0.9996$ on both datasets with normalized entropy below $0.005$ (Table~\ref{tab:mixer-collapse}, Appendix~\ref{app:mixer}). Imposing a floor $\alpha_k \geq \alpha_{\min} \in \{0.25, 0.5\}$ preventing collapse modestly improves dev NDCG@10 ($+0.009$ on MIND, $+0.006$ on EB-NeRD) but does not exceed the uniform-coefficient configuration, and the operator ordering of Table~\ref{tab:operator-ablation} is unchanged.

\paragraph{Scope of the observation.} Mixer collapse explains the \emph{magnitude} of operator-level effects but not their \emph{direction}. The spectrum is a property of the static graph and the trained scoring function: the within-slate lift $\Delta_o(k)$ (Eq.~\ref{eq:lift}) measures which item pairs the trained embeddings score similarly, independent of the propagation channel's magnitude. The diagnosis therefore predicts a qualitative \emph{ordering}: positive smoothing on substitution (incorrect on strong edges) $<$ uniform sign flip (incorrect across both regimes) $<$ edge-magnitude-aware (matches graph structure); we observe exactly this ordering in Tables~\ref{tab:edge-partitioned} and~\ref{tab:remedies-appendix}. The propagation channel is attenuated, not nullified: imposing a floor improves NDCG@10 and preserves the operator ordering of Table~\ref{tab:operator-ablation}.

\section{The Edge Spectrum}
\label{sec:why}

\noindent\textit{\textbf{Take-away.} A sign mismatch (Proposition~\ref{prop:sign}) connects substitution edges to within-slate BPR; co-click cannot satisfy the same condition (Lemma~\ref{lemma:coclick}). The lift profile (Figure~\ref{fig:kspectrum}) confirms both predictions, with a sign change across $k$ on EB-NeRD.}

\vspace{0.5em}
\noindent This section identifies the structural reason behind Section~\ref{sec:main-comparison}'s observation. Choice-derived item graphs exhibit an \emph{edge spectrum}: their strong and weak edges encode qualitatively different relations, and this property is absent on co-click by construction. We formalize the mechanism (Proposition~\ref{prop:sign}, Lemma~\ref{lemma:coclick}) and confirm both predictions empirically (Section~\ref{subsec:lift-confirmation}).

\subsection{Setup and Notation}
\label{subsec:why-setup}

Fix an impression $t$ with user $u_t$, candidate slate $C_t$, and click set $Y_t \subseteq C_t$. For any item-side operator $C_I$ with symmetric non-negative weights $W_{k\ell}$, consider a single first-order smoothing step:
\begin{equation}
\tilde{x}_k = x_k + \alpha \sum_{\ell} W_{k\ell}\, x_\ell, \qquad \alpha > 0,
\label{eq:smoothing-step}
\end{equation}
followed by bilinear scoring $s_k(u) = \langle h_u, x_k \rangle$. The within-slate ranking objective for any clicked--unclicked pair $(i,j) \in Y_t \times (C_t \setminus Y_t)$ is $\mathcal{L}_{ij} = -\log \sigma(s_i - s_j)$. We use $W_{ij}^{\mathrm{sub}}$ for the symmetrized MNL diversion weight and $W_{ij}^{\mathrm{cc}}$ for the symmetric co-click weight, both after degree normalization.

\subsection{Sign Mismatch for Choice-Derived Edges}
\label{subsec:proposition}

The MNL diversion weight is bounded below by $\Delta_{j \to i}(u, C) \geq P(i \mid u, C) P(j \mid u, C)$, so $W_{ij}^{\mathrm{sub}}$ aggregates a quantity that is large precisely when both $i$ and $j$ are jointly plausible choices in the same slate. This is what makes strong substitution edges concentrate on in-slate clicked--unclicked pairs.

\begin{proposition}[Sign mismatch on choice-derived edges]
\label{prop:sign}
Let $(i, j)$ be a clicked--unclicked pair in impression $t$ such that $W_{ij}^{\mathrm{sub}}$ is in the top-$K$ row of $\Csub$ for $i$, with $W_{ij}^{\mathrm{sub}} \geq \sum_{\ell \neq j} W_{i\ell}^{\mathrm{sub}}$ (top-$K$-dominance). Under the smoothing step of Eq.~\ref{eq:smoothing-step}, the within-slate BPR loss $\mathcal{L}_{ij}$, and bilinear scoring, the score gap evolves to first order in $\alpha$ as
\begin{equation}
\tilde{s}_i - \tilde{s}_j = (1 - \alpha W_{ij}^{\mathrm{sub}})\, (s_i - s_j) + O(\alpha \cdot \epsilon_{ij}),
\label{eq:score-gap-evolution}
\end{equation}
where $\epsilon_{ij} := \max(\sum_{\ell \neq j} W_{i\ell}^{\mathrm{sub}}, \sum_{\ell \neq i} W_{j\ell}^{\mathrm{sub}}) \leq W_{ij}^{\mathrm{sub}}$ under top-$K$-dominance. Meanwhile $\nabla_\theta \mathcal{L}_{ij}$ pushes $s_i - s_j$ upward. The substitution operator and the ranking gradient therefore act on $(s_i, s_j)$ in opposing directions whenever $\alpha W_{ij}^{\mathrm{sub}} > 0$.
\end{proposition}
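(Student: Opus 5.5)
The plan is to compute the smoothed scores directly from Eq.~\ref{eq:smoothing-step} and bilinear scoring, then split the neighbor sum into the $(i,j)$ edge and a remainder. By linearity, $\tilde{s}_k = \langle h_u, \tilde{x}_k\rangle = s_k + \alpha \sum_\ell W^{\mathrm{sub}}_{k\ell} s_\ell$. For $k=i$, isolating $\ell=j$ gives $\tilde{s}_i = s_i + \alpha W^{\mathrm{sub}}_{ij} s_j + \alpha R_i$, with $R_i := \sum_{\ell\neq j} W^{\mathrm{sub}}_{i\ell} s_\ell$. For $k=j$, symmetry $W^{\mathrm{sub}}_{ji} = W^{\mathrm{sub}}_{ij}$ gives $\tilde{s}_j = s_j + \alpha W^{\mathrm{sub}}_{ij} s_i + \alpha R_j$. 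Subtracting,
\[
\tilde{s}_i - \tilde{s}_j = (1 - \alpha W^{\mathrm{sub}}_{ij})(s_i - s_j) + \alpha (R_i - R_j).
\]
This identity is exact. It is the only place symmetry of $\Csub$ is needed, and I will state explicitly that it comes from the symmetrization $\Wtilde_{ij} = \tfrac12(\Delta_{j\to i} + \Delta_{i\to j})$ together with the symmetric normalization $D^{-1/2}\cdot D^{-1/2}$.

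Next I bound the remainder. Assume, as is implicit in the $O(\cdot)$ statement, that the scores are bounded, $|s_\ell| \le M$ (for instance $\|h_u\|\,\max_\ell \|x_\ell\|$, finite on the fixed item set). Non-negativity of the weights then gives
\[
|R_i| \le M \sum_{\ell\neq j} W^{\mathrm{sub}}_{i\ell} \le M\epsilon_{ij}, \qquad |R_j| \le M \sum_{\ell\neq i} W^{\mathrm{sub}}_{j\ell} \le M\epsilon_{ij},
\]
so $\alpha(R_i - R_j) = O(\alpha\,\epsilon_{ij})$, which yields Eq.~\ref{eq:score-gap-evolution}. The inequality $\epsilon_{ij} \le W^{\mathrm{sub}}_{ij}$ follows from top-$K$-dominance on the row of $i$. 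For the row of $j$, I need either the dominance condition applied symmetrically or a reading of the hypothesis as covering both endpoints. I will state this explicitly, since the proposition as written only assumes it for $i$; this is the step I expect to need care.

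For the gradient direction, I will differentiate $\mathcal{L}_{ij} = -\log\sigma(s_i - s_j)$. This gives $\partial\mathcal{L}_{ij}/\partial(s_i - s_j) = -(1-\sigma(s_i - s_j)) < 0$. A gradient-descent step $\theta \leftarrow \theta - \eta\nabla_\theta\mathcal{L}_{ij}$ therefore changes the gap, to first order in $\eta$, by $\eta\,(1-\sigma)\,\|\nabla_\theta(s_i - s_j)\|^2 \ge 0$, which is strictly positive whenever the gap has nonzero gradient. So BPR increases $|s_i - s_j|$ in the direction favoring $i$.

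Finally I compare the two effects. With $0 < \alpha W^{\mathrm{sub}}_{ij} < 1$ (degree normalization gives $W^{\mathrm{sub}}_{ij} \le 1$, and $\alpha$ is small), the leading term multiplies the gap by a factor strictly less than one. If $s_i > s_j$, it shrinks a positive gap toward zero. If $s_i < s_j$, it shrinks the magnitude of the negative gap, which moves $s_i - s_j$ upward but only toward zero and never past it. In both cases the operator contracts the gap toward $0$, while the gradient drives it toward $+\infty$; this contraction is the sense of "opposing directions". The main obstacle is making that phrase precise so that it survives the $O(\alpha\epsilon_{ij})$ correction. I will state the conclusion for the leading-order term, and note that top-$K$-dominance makes the correction at most comparable to the contraction, not larger, so it cannot reverse the contraction in general. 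I will also flag that a strict guarantee needs the stronger condition $M\epsilon_{ij} < W^{\mathrm{sub}}_{ij}\,|s_i - s_j|/2$.
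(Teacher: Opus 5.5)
Your argument is correct and is essentially the paper's proof. Like you, the paper uses symmetry of $W^{\mathrm{sub}}$ to isolate the $(i,j)$ edge as the factor $(1-\alpha W_{ij}^{\mathrm{sub}})$, bounds the other neighbor terms by $O(\alpha\epsilon_{ij})$ (working with $\tilde{x}_i-\tilde{x}_j$ and then scoring, rather than with scores directly), and reads the gradient direction off $\partial\mathcal{L}_{ij}/\partial(s_i-s_j) = -\sigma(-(s_i-s_j)) < 0$. Your extra observations are right: that bound needs only non-negativity and bounded scores (not dominance), $\epsilon_{ij}\le W_{ij}^{\mathrm{sub}}$ also requires dominance on the row of $j$, and ``opposing directions'' is literal only when $s_i > s_j$.

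Two small corrections. First, take symmetry from the standing assumption on $C_I$ rather than deriving it from symmetrization and $D^{-1/2}(\cdot)D^{-1/2}$, since row-wise top-$K$ pruning can break it. Second, drop the claim that dominance keeps the remainder from reversing the contraction. The relevant comparison is $2M\epsilon_{ij}$ against $W_{ij}^{\mathrm{sub}}|s_i-s_j| \le 2MW_{ij}^{\mathrm{sub}}$, which $\epsilon_{ij}\le W_{ij}^{\mathrm{sub}}$ does not settle; it is exactly what your stronger condition $M\epsilon_{ij} < W_{ij}^{\mathrm{sub}}|s_i-s_j|/2$ controls.
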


\begin{proof}
By symmetry of $W^{\mathrm{sub}}$, $\tilde{x}_i - \tilde{x}_j = (x_i - x_j) - \alpha W_{ij}^{\mathrm{sub}} (x_i - x_j) + R_{ij}$, where $R_{ij} = \alpha \sum_{\ell \neq i,j} (W_{i\ell}^{\mathrm{sub}} - W_{j\ell}^{\mathrm{sub}}) x_\ell$. Under top-$K$-dominance, $\|R_{ij}\| = O(\alpha \epsilon_{ij})$. Applying bilinear scoring yields Eq.~\ref{eq:score-gap-evolution}. The gradient $\nabla_{(s_i - s_j)} \mathcal{L}_{ij} = -\sigma(-(s_i - s_j)) < 0$, so a gradient step strictly increases $s_i - s_j$.
\end{proof}

\paragraph{What this means.}
Smoothing reduces the score gap between a clicked item and its strong-edge competitor by a factor $(1 - \alpha W_{ij}^{\mathrm{sub}})$. The ranking gradient simultaneously increases it. The two operations fight each other on exactly the pairs where the substitution weight is largest, namely the in-slate competitors that the choice model identifies as competing for the same click. Three facts compose this result: (i) symmetric positive smoothing shrinks score gaps; (ii) the within-slate BPR gradient grows them; and (iii) \emph{the MNL diversion form makes top-$K$-dominance hold precisely on within-slate clicked--unclicked pairs}. Facts (i) and (ii) hold for any non-negative graph; fact (iii) is what distinguishes choice-derived graphs from co-click and is what Lemma~\ref{lemma:coclick} shows fails by construction for co-click.

\paragraph{Remark (scope of the proposition).}
Proposition~\ref{prop:sign} states the \emph{direction} of effect on top-$K$-dominant pairs to first order in $\alpha$; cross-slate signal and optimizer dynamics modulate the net training equilibrium, which the lift profile of Figure~\ref{fig:kspectrum} captures empirically. Top-$K$-dominance naturally emerges from MNL diversion because slate-level choice probabilities concentrate on a few high-utility items per impression. Beyond this regime, $\epsilon_{ij}$ becomes comparable to $W_{ij}^{\mathrm{sub}}$ and the first-order direction is no longer predicted; the lift profile accordingly decays toward zero or reverses sign past the regime boundary, located by the diagnostic of \S\ref{subsec:diagnostic-protocol} at $k \approx 10$ and respected by the edge-partitioned operator of \S\ref{subsec:edge-partitioned}.

\subsection{Why Co-Click Does Not Satisfy the Same Misalignment}
\label{subsec:lemma-coclick}

The structural difference between the two graphs is at the level of \emph{which pairs} carry large weight, not at the level of the smoothing operator itself.

\begin{lemma}[Co-click does not concentrate on in-slate pairs]
\label{lemma:coclick}
For any impression $t$ with $i \in Y_t$ and $j \notin Y_t$, the contribution of impression $t$ to $W_{ij}^{\mathrm{cc}}$ is zero. Consequently, $W_{ij}^{\mathrm{cc}}$ is large only when $i$ and $j$ have been \emph{co-positive} across the user history, a relation structurally distinct from the within-slate clicked--unclicked pair structure that Proposition~\ref{prop:sign} reasons about: no impression $t$ with $i \in Y_t$ and $j \notin Y_t$ contributes to $W_{ij}^{\mathrm{cc}}$, while exactly such impressions dominate the construction of $W_{ij}^{\mathrm{sub}}$.
\end{lemma}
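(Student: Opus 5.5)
The plan is to argue directly from the definitions of $W^{\mathrm{cc}}$ and $W^{\mathrm{sub}}$ given in Section~\ref{subsec:lightgcn-implicit} and Section~\ref{subsec:substitution-build}. No analytic estimate is needed; the lemma is essentially a support statement about which impressions can add mass to each weight.

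\textbf{Step 1: decompose the co-click weight by impression.} The weight is $W_{ij}^{\mathrm{cc}} = \#\{u : \mathrm{click}_{ui} = \mathrm{click}_{uj} = 1\}$. Define the contribution of an impression $t$ to be the indicator that impression $t$ is a co-click event for the pair $(i,j)$. Concretely, this is $\mathbf{1}[i \in Y_t]\,\mathbf{1}[j \in Y_t]$ at the impression level, or, when aggregating per user, the event that $t$ is among the impressions certifying $\mathrm{click}_{u_t j}=1$. In either form the factor $\mathbf{1}[j \in Y_t]$ vanishes whenever $j \notin Y_t$. The contribution of $t$ is therefore exactly zero, which gives the first sentence of the lemma. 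Symmetric degree normalization only rescales entries by $D^{-1/2}_{ii}D^{-1/2}_{jj}>0$, so zero contributions remain zero. Top-$K$ pruning only deletes entries, so it cannot create mass either.

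\textbf{Step 2: characterize when the co-click weight is large.} Summing Step~1 over impressions shows that $W_{ij}^{\mathrm{cc}}$ is a nonnegative sum whose terms are supported only on events where both items were clicked by the same user. Hence $W_{ij}^{\mathrm{cc}}$ is large only if many users co-clicked $i$ and $j$, which is the co-positive relation. This support set is disjoint from the set of clicked--unclicked impression pairs $\{t : i \in Y_t,\ j \in C_t\setminus Y_t\}$ that Proposition~\ref{prop:sign} concerns.

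\textbf{Step 3: the contrast with $W^{\mathrm{sub}}$.} Here I would use Eq.~\ref{eq:diversion} and the lower bound $\Delta_{j\to i}\geq P(i\mid u,C)P(j\mid u,C)$. Every impression with $i,j\in C_t$ contributes strictly positive mass to $\widetilde W_{ij}$, since MNL probabilities are positive, and this holds regardless of click labels. Among these impressions, the ones with a click on $i$ are exactly those where the fitted MNL assigns high $P(i\mid u_t,C_t)$. Slates typically contain a single click, so $|Y_t|$ is small relative to $|C_t|$. Consequently most co-appearances with a clicked $i$ have $j\notin Y_t$, and these clicked--unclicked impressions carry the bulk of $W_{ij}^{\mathrm{sub}}$.

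\textbf{Main obstacle.} The hard step is making ``dominate'' in Step~3 precise, because it is an empirical and distributional claim rather than a deterministic one. I would handle it in two ways. First, I would state a weak deterministic version: the clicked--unclicked impressions form a subset of the contributing set with positive total weight, whereas for co-click the same subset contributes zero weight. Second, I would defer the quantitative concentration to the lift measurements of Section~\ref{subsec:lift-confirmation}. The first sentence of the lemma, together with its structural contrast, then follows rigorously from Steps~1 and~2.
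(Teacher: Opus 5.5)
Your Steps 1--2 follow the paper's argument: the co-click indicator for $(i,j)$ vanishes on any impression with $j\notin Y_t$, so $W^{\mathrm{cc}}_{ij}$ is supported only on co-positive events. Your impression-level attribution (``$t$ certifies the click on $j$'') is slightly more careful than the paper's assertion $\mathrm{click}_{u_t j}=0$, which does not follow from $j\notin Y_t$ alone if $u_t$ clicked $j$ in another impression. Like the paper, you do not prove the ``dominate'' clause for $W^{\mathrm{sub}}$, and your inference from $|Y_t|\ll|C_t|$ only covers impressions where $i$ is clicked, not the neither-clicked co-appearances that also contribute to $\Wtilde_{ij}$. Falling back on the weak deterministic version plus the empirical lift evidence is therefore the right reading.
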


\begin{proof}
By construction $W_{ij}^{\mathrm{cc}} = \sum_u \mathbb{1}[\mathrm{click}_{ui} = 1 \wedge \mathrm{click}_{uj} = 1]$. For impression $t$ with $i \in Y_t$, $j \notin Y_t$, $\mathrm{click}_{u_t, j} = 0$, so the indicator is zero. Large $W_{ij}^{\mathrm{cc}}$ requires that across other impressions or users both $i$ and $j$ were clicked. The within-slate structure measures co-candidacy with disagreement; the co-click weight measures co-positivity. The two sources of weight are constructed from disjoint event types.
\end{proof}

\paragraph{What this means.}
Proposition~\ref{prop:sign} requires $W_{ij}$ be large on the same pairs the ranking gradient acts on. Choice-derived graphs satisfy this by construction; co-click graphs violate it by construction. The same propagation operator therefore induces different geometric effects on the two graphs.

\subsection{Empirical Confirmation: Within-Slate Lift Across Cutoffs}
\label{subsec:lift-confirmation}

Proposition~\ref{prop:sign} and Lemma~\ref{lemma:coclick} jointly predict (i) that the substitution operator should raise the scores of unclicked in-slate items connected to clicked items via high-weight edges, and (ii) that co-click should not exhibit a comparable effect. For each impression $t$ with at least one click and a clicked item $i \in Y_t$, partition the unclicked candidates $C_t \setminus Y_t$ into operator-neighbors of $i$ at cutoff $k$ and non-neighbors, and define
\begin{equation}
\Delta_o(k) = \mathbb{E}_{t, i \in Y_t}\!\left[
  \overline{s}\bigl(N_o^{(k)}(i) \cap (C_t \setminus Y_t)\bigr)
  - \overline{s}\bigl((C_t \setminus Y_t) \setminus N_o^{(k)}(i)\bigr)
\right],
\label{eq:lift}
\end{equation}
where $N_o^{(k)}(i)$ is the top-$k$ neighborhood and $\overline{s}(\cdot)$ is mean predicted score. Figure~\ref{fig:kspectrum} reports $\Delta_o(k)$ for all four operators across cutoffs.

\begin{figure*}[t]
  \centering
  \includegraphics[width=0.92\textwidth]{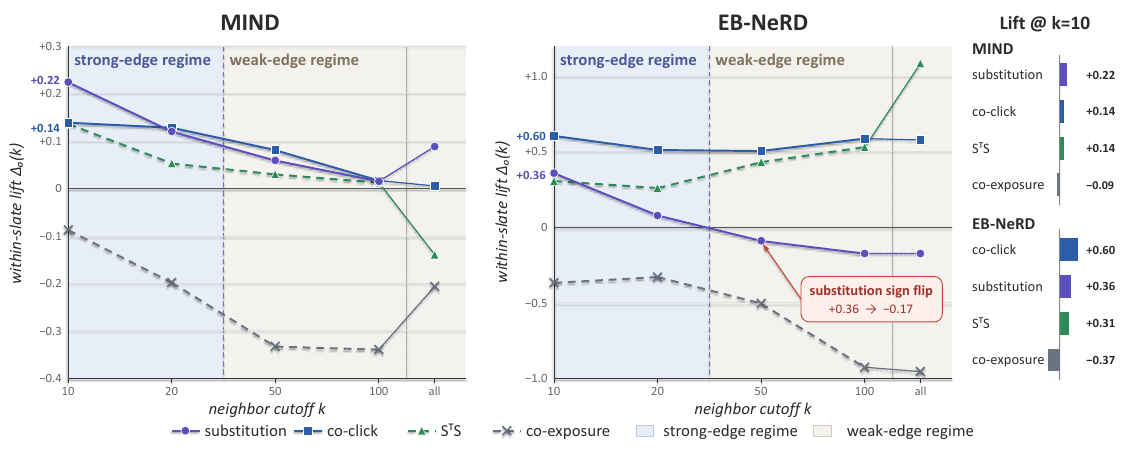}
  \vspace{-0.3cm}
  \caption{\textbf{The edge spectrum, visualized.} Within-slate lift $\Delta_o(k)$ for the four operators across neighbor cutoffs $k$. The $k=10$ region is the \textbf{strong-edge regime} (top-$K$-dominance holds, Proposition~\ref{prop:sign} applies); $k \geq 50$ is the \textbf{weak-edge regime} (top-$K$-dominance breaks down). On EB-NeRD, substitution flips sign between these regimes ($+0.36 \to -0.17$), confirming that two values of the same cutoff index relations of opposite sign. Co-click is flat throughout (Lemma~\ref{lemma:coclick}). Mean $\pm$ std over 3 seeds.}
  \label{fig:kspectrum}
  \Description{Line plots of the within-slate competitor lift as a function of the neighbor cutoff k for four item-side operators on the MIND and EB-NeRD datasets, showing that the substitution operator has large positive lift at small k and decays or flips sign at large k, while co-click stays nearly flat.}
  \vspace{-0.3cm}
\end{figure*}

The figure confirms both predictions. At small $k$, substitution lift is large and positive on both datasets: $\Delta_{\mathrm{sub}}(10) = +0.22$ on MIND (largest of the four) and $+0.36$ on EB-NeRD (second to co-click's $+0.60$). Strong substitution edges index items the trained model scores higher than other in-slate non-neighbors, as Proposition~\ref{prop:sign} predicts. As $k$ grows, this lift decays or reverses: on MIND $\Delta_{\mathrm{sub}}$ falls to $+0.015$ at $k=100$; on EB-NeRD it flips sign between $k=20$ and $k=50$ and remains negative through $k = $ all ($-0.17$). Beyond the strong-edge regime, $\epsilon_{ij}$ becomes comparable to $W_{ij}^{\mathrm{sub}}$ and the first-order approximation no longer predicts a single sign. By contrast $\Delta_{\text{co-click}}$ is nearly flat across $k$ (MIND $+0.14 \to +0.01$; EB-NeRD $+0.60 \to +0.58$): co-click neighborhoods do not partition into qualitatively distinct sub-populations by weight, as Lemma~\ref{lemma:coclick} predicts. Co-exposure produces large negative lift across all $k$ ($-0.09$ on MIND, $-0.37$ on EB-NeRD at $k=10$), indexing a third relation distinct from competition and shared preference.

\begin{table}[t]
\caption{Within-slate competitor lift at $k{=}10$ vs.\ headline NDCG@10 (mean $\pm$ std, 3 seeds). \textbf{On MIND, lift and NDCG anti-correlate across operators}: substitution attains the largest lift and the lowest NDCG, while co-click pairs a modest lift with the highest NDCG, as Proposition~\ref{prop:sign} predicts. On EB-NeRD, co-click leads in both, but substitution's misalignment does not surface as a headline penalty (analyzed in \S\ref{sec:when}).}
\label{tab:lift-vs-ndcg}
\small
\setlength{\tabcolsep}{6pt}
\begin{tabular}{llcc}
\toprule
\textbf{Dataset} & \textbf{Operator} & \textbf{Lift @ $k{=}10$} & \textbf{NDCG@10} \\
\midrule
\multirow{4}{*}{MIND}
        & co-click      & $+0.138 \pm 0.004$         & $\mathbf{0.4088 \pm 0.0011}$ \\
        & $S^\top S$    & $+0.135 \pm 0.004$         & $0.4052 \pm 0.0006$ \\
        & co-exposure   & $-0.088 \pm 0.001$         & $0.4054 \pm 0.0008$ \\
        & substitution  & $\mathbf{+0.223 \pm 0.027}$ & $0.4036 \pm 0.0019$ \\
\midrule
\multirow{4}{*}{EB-NeRD}
        & co-click      & $\mathbf{+0.605 \pm 0.004}$ & $\mathbf{0.5985 \pm 0.0010}$ \\
        & $S^\top S$    & $+0.306 \pm 0.003$         & $0.5953 \pm 0.0021$ \\
        & co-exposure   & $-0.368 \pm 0.002$         & $0.5902 \pm 0.0011$ \\
        & substitution  & $+0.357 \pm 0.065$         & $0.5905 \pm 0.0009$ \\
\bottomrule
\end{tabular}
\vspace{-0.2cm}
\end{table}

Table~\ref{tab:lift-vs-ndcg} summarizes the lift-vs-NDCG relationship. On MIND, substitution attains the largest top-$K$ lift and the lowest NDCG; co-click pairs a modest lift with the highest NDCG. The two quantities are anti-correlated across operators on this dataset, in the direction Proposition~\ref{prop:sign} predicts. On EB-NeRD, co-click leads in both lift and NDCG, and the within-slate misalignment persists for substitution but does not surface as a corresponding penalty in the headline metric. Section~\ref{sec:when} examines this asymmetry. The combined evidence supports a refined characterization: \emph{strong substitution edges index in-slate competitors and are structurally misaligned with within-slate ranking; weak edges do not; co-click edges do not partition by weight in this way}, so on choice-derived graphs the neighbor cutoff $k$ is a \emph{semantic switch}, not a sparsification hyperparameter.

\section{When Does the Edge Spectrum Surface in Headline Metrics?}
\label{sec:when}
\noindent\textit{\textbf{Take-away.} Whether the within-slate misalignment surfaces in headline NDCG@10 depends on three dataset-conditional quantities, which we package into a four-step diagnostic protocol practitioners can run before deployment.}

\vspace{0.5em}
\noindent Section~\ref{subsec:lift-confirmation} confirms the within-slate lift on both datasets, but its consequence for headline NDCG@10 differs: on MIND, substitution attains the largest top-$K$ lift \emph{and} the lowest NDCG; on EB-NeRD, the lift persists but co-click leads in both lift and NDCG. The same per-pair gradient interaction produces visible damage in headline metrics on one dataset and is invisible on the other. Whether per-pair misalignment dominates headline metrics depends on three dataset-conditional quantities, which we convert into a diagnostic protocol for any choice-derived operator.

\subsection{Three Quantities That Modulate Whether Misalignment Surfaces}
\label{subsec:three-quantities}

Three dataset-conditional quantities, which co-vary on our two datasets and which we cannot fully disentangle here, modulate whether per-pair misalignment translates into a headline penalty: \emph{(i) In-slate neighborhood density} (mean unclicked in-slate substitution-neighbors per clicked item: $5.49$ on MIND vs.\ $2.84$ on EB-NeRD; aggregate misalignment scales with this count); \emph{(ii) Choice-context predictability} (MNL held-out NLL $3.22$ vs.\ $1.92$; more predictable impressions imply weaker per-pair competition, so propagation has smaller marginal effect on headline ranking); \emph{(iii) Candidate-pool construction} (MIND uses a fixed displayed-impression list per session; EB-NeRD's candidate pool is shaped by the platform's own recommender, possibly decoupling in-slate proximity from choice-level substitutability). We list them here as candidate moderators that future work with additional datasets could separate; in the present two-dataset setting they should be read as a single bundle of conditions, not as independently identified factors. They are what the diagnostic of Section~\ref{subsec:diagnostic-protocol} measures before committing to a choice-derived operator.

\subsection{A Diagnostic Protocol}
\label{subsec:diagnostic-protocol}

Given a trained model and a held-out impression set, the protocol returns (a) whether the spectrum is present and (b) whether it will surface in headline metrics. Box~\ref{box:protocol} summarizes the four steps.

\begin{protocolbox}{Diagnostic Protocol for Choice-Derived Item-Side Operators}
\label{box:protocol}
\textbf{Inputs:} held-out impression set; candidate operator $C_I^{\mathrm{cand}}$; co-click reference $C_I^{\mathrm{cc}}$.

\smallskip
\textbf{Step~1 (Build).} Train target graph CF with $C_I^{\mathrm{cand}}$ in the item-side slot, all else fixed.

\textbf{Step~2 (Lift profile).} Compute $\Delta_o(k)$ (Eq.~\ref{eq:lift}) for $k \in \{10, 20, 50, 100, \mathrm{all}\}$ for both operators. For each impression $t$ with $Y_t \neq \emptyset$ and clicked $i \in Y_t$, partition $C_t \setminus Y_t$ into $N_o^{(k)}(i) \cap (C_t \setminus Y_t)$ and its complement (skip $(i, k)$ when empty), and average mean partition scores across contributing $(t, i)$ pairs.

\textbf{Step~3 (Spectrum present?).} \emph{Yes} if $\mathrm{sign}(\Delta_{\mathrm{cand}}(10)) \neq \mathrm{sign}(\Delta_{\mathrm{cand}}(\mathrm{all}))$ \emph{or} $|\Delta_{\mathrm{cand}}(10)| > 3 \cdot |\Delta_{\mathrm{cand}}(\mathrm{all})|$, while $\Delta_{\mathrm{cc}}(k)$ is flat.

\textbf{Step~4 (Surfaces in NDCG?).} \emph{Yes} if additionally in-slate operator-neighborhood density $\mathbb{E}_{t, i \in Y_t}[|N_o^{(10)}(i) \cap (C_t \setminus Y_t)|]$ is high \emph{and} choice predictability (held-out MNL NLL) is low.

\smallskip
\textbf{Cost:} one inference pass per $(k, \mathrm{operator})$ + one MNL fit per dataset.

\smallskip
\textbf{Verdict on our data:} MIND $\to$ \textbf{spectrum present, surfaces in headline}; EB-NeRD $\to$ \textbf{spectrum present, headline impact attenuated} (Quantity~ii is lower).
\end{protocolbox}

The protocol identifies the misalignment, not its magnitude, and assumes positive smoothing (hard-negative sampling: \S\ref{sec:discussion}); external validity awaits additional choice-context datasets.

\section{Remedies: From Failures to Edge-Magnitude-Aware Design}
\label{sec:remedies}
\noindent\textit{\textbf{Take-away.} Uniform scalar interventions fail because the misalignment lives in the graph, not the loss; a pre-registered edge-magnitude-aware operator matches the diagnosis four-for-four on both datasets.}

\vspace{0.5em}
\noindent Uniform scalar interventions implied by the diagnosis (sign flip, in-slate margin loss) fail predictably because the misalignment lives in the graph, not in the loss (Section~\ref{subsec:scalar-remedies}). The failures localize the misalignment to the operator itself and motivate an edge-magnitude-aware construction the diagnostic predicts. We pre-register and evaluate it in Section~\ref{subsec:edge-partitioned}: the operator matches co-click on MIND and marginally exceeds it on EB-NeRD, in the ordering the diagnosis predicts.

\subsection{Why Scalar Remedies Fail}
\label{subsec:scalar-remedies}

We test two natural scalar interventions suggested by Proposition~\ref{prop:sign}; both fail (full sweep in Appendix~\ref{app:remedies}).

\emph{Sign-flipped propagation} replaces $I + \alpha C_I$ with $I - \gamma C_I$ in the even-hop term, pushing strong-edge competitors down. Sweeping $\gamma \in \{0.25, 0.50, 1.00\}$, the flip helps weakly on EB-NeRD ($+0.0006$ over co-click at $\gamma{=}1.0$) and hurts on MIND ($-0.0082$ at $\gamma{=}0.25$). The asymmetry is what the diagnosis predicts: $\gamma$ correctly inverts the misalignment on strong edges but incorrectly inverts the weak-edge regime as well, and the two effects partially cancel.

\emph{Competition-aware margin loss} leaves propagation alone and adds a hinge term $\mathcal{L}_{\mathrm{comp}} = \lambda \cdot \mathbb{E}_{t, i \in Y_t, j \in N(i) \cap (C_t \setminus Y_t)} [\max(0, m + s_j - s_i)]$ penalizing within-slate substitution-neighbors of clicked items. Sweeping $\lambda \in \{0.1, 0.3, 1.0\}$ and $m \in \{0.0, 0.1\}$, the best cells trail co-click by $0.006$--$0.008$ on both datasets: a downstream gradient cannot undo what the operator wrote upstream. \emph{The misalignment lives in the graph, not in the loss}, and a successful operator must therefore be \emph{edge-magnitude-aware}, treating the strong head separately from the weak tail.

\subsection{Edge-Partitioned Substitution: A Diagnostic-Guided Operator}
\label{subsec:edge-partitioned}

The diagnosis stakes a qualitative pattern \emph{before} the operator is built or tuned. Table~\ref{tab:prereg-verdict} summarizes the four pre-registered predictions and the observed outcomes; all four match on both datasets.

\begin{table}[t]
\caption{Pre-registered predictions vs.\ observations. The diagnosis specifies four qualitative outcomes \emph{before} the edge-partitioned operator is built or tuned; \textbf{all four match on both datasets}. The evidence for the diagnosis is the alignment of this pattern, not the absolute size of any single gain.}
\label{tab:prereg-verdict}
\small
\setlength{\tabcolsep}{4pt}
\begin{tabular}{@{}p{0.50\columnwidth}cc@{}}
\toprule
\textbf{Pre-registered prediction} & \textbf{MIND} & \textbf{EB-NeRD} \\
\midrule
(a) Uniform sign flip fails           & \checkmark & \checkmark \\
(b) Edge-partitioned $\geq$ co-click  & \checkmark (tie) & \checkmark \\
(c) Regime boundary at $k{=}10$       & \checkmark & \checkmark \\
(d) Weak tail ($\beta{=}0$) is best   & \checkmark & \checkmark \\
\bottomrule
\end{tabular}
\vspace{-0.3cm}
\end{table}

We instantiate this prediction. For each row $i$ of the symmetrized substitution graph $\Wtilde$, partition the top-$K_{\mathrm{total}}{=}50$ non-zero entries by diversion magnitude into a \emph{strong} set (top $K_{\mathrm{strong}}$) and a \emph{weak} set. Symmetrize and degree-normalize each part, and combine with opposite signs:
\begin{equation}
C_I^{\mathrm{signed}} = -\gamma \cdot D^{-1/2}\,\mathrm{sym}(\Wtilde^{\mathrm{strong}})\,D^{-1/2} + \beta \cdot D^{-1/2}\,\mathrm{sym}(\Wtilde^{\mathrm{weak}})\,D^{-1/2},
\label{eq:signed-operator}
\end{equation}
with $\gamma \geq 0$ and $\beta \in [0, 1]$. The operator plugs into the item-side propagation slot without any other code change. To avoid test-on-train selection, we \emph{pre-register} two anchor configurations and restrict the multi-seed evaluation to them: Anchor-A ($K_{\mathrm{strong}}{=}10$, $\gamma{=}1.0$, $\beta{=}1.0$), the natural default; and Anchor-B ($K_{\mathrm{strong}}{=}10$, $\gamma{=}1.0$, $\beta{=}0.0$), the conservative variant discarding the weak tail. $K_{\mathrm{strong}}{=}10$ is the cutoff at which Section~\ref{subsec:lift-confirmation} places the strong-to-weak boundary, so the anchors are specified by the diagnosis rather than post-hoc inspection. We sweep $K_{\mathrm{strong}} \in \{5, 10, 20\}$, $\gamma \in \{0.5, 1.0, 2.0\}$, $\beta \in \{0.0, 0.5, 1.0\}$ at seed=1 to characterize the landscape; sweep cells are exploratory.

\begin{table}[t]
\caption{Edge-partitioned substitution operator, NDCG@10 on validation. Configurations written $(K_{\mathrm{strong}}, \gamma, \beta)$. \textbf{Both pre-registered anchors strictly exceed positive substitution}; \textbf{Anchor-B matches co-click on MIND and marginally exceeds it on EB-NeRD}, with $K_{\mathrm{strong}}{=}10$ set \emph{a priori} by Figure~\ref{fig:kspectrum}'s regime boundary, not by tuning. Bold marks cells exceeding co-click; full-sweep cells are exploratory (seed=1, 27-cell grid).}
\label{tab:edge-partitioned}
\small
\setlength{\tabcolsep}{4pt}
\begin{tabular}{@{}lcc@{}}
\toprule
\textbf{Configuration} & \textbf{MIND} & \textbf{EB-NeRD} \\
\midrule
positive substitution (ref.) & $0.3899$ & $0.5969$ \\
co-click (ref.)              & $0.3928$ & $0.6018$ \\
\midrule
\multicolumn{3}{@{}l}{\emph{Edge-partitioned, pre-registered anchors (3 seeds)}} \\
Anchor-A $(10, 1.0, 1.0)$ & $0.3925 \pm 0.0017$ & $0.6015 \pm 0.0018$ \\
Anchor-B $(10, 1.0, 0.0)$ & $\mathbf{0.3941 \pm 0.0029}$ & $\mathbf{0.6045 \pm 0.0019}$ \\
\midrule
\multicolumn{3}{@{}l}{\emph{Full sweep best (seed=1, exploratory)}} \\
MIND best $(20, 0.5, 0.5)$    & $0.3922$ & --   \\
EB-NeRD best $(10, 0.5, 0.0)$ & --       & $\mathbf{0.6063}$ \\
\bottomrule
\end{tabular}
\vspace{-0.3cm}
\end{table}

Three observations. First, both pre-registered anchors exceed positive substitution on both datasets by margins outside seed standard deviation: Anchor-A by $+0.0026$ on MIND and $+0.0046$ on EB-NeRD, Anchor-B by $+0.0042$ on MIND and $+0.0076$ on EB-NeRD. The diagnosis predicts that signing the strong edges negatively should improve over positive smoothing on the same graph, and this is confirmed at multi-seed scale. Second, the comparison against co-click is dataset-asymmetric: Anchor-B exceeds co-click by $+0.0027 \pm 0.0019$ on EB-NeRD (margin outside seed standard deviation, though we do not claim statistical significance under the limited seed count) and is statistically indistinguishable from co-click on MIND ($+0.0013 \pm 0.0029$). Third, the sweep landscape is consistent with the diagnosis: $K_{\mathrm{strong}}{=}10$ cells outperform $K_{\mathrm{strong}} \in \{5, 20\}$, and $\beta{=}0.0$ consistently outperforms $\beta \in \{0.5, 1.0\}$, suggesting the weak tail does not encode a clean propagation signal of either sign and is best discarded.

\paragraph{Why the absolute margin is small.} The improvement over co-click is small ($+0.001$ to $+0.003$ NDCG@10, within the operator-equivalence band of \S\ref{sec:main-comparison}), exactly as the diagnosis predicts. The pre-registered evaluation matches four-for-four: (a) scalar remedies fail, (b) edge-magnitude-aware operators are not worse than co-click, (c) the regime boundary localizes at $k{=}10$, and (d) the weak tail is best discarded. Under the attenuated propagation channel of \S\ref{subsec:mixer-collapse}, the predicted \emph{ordering} of interventions (positive smoothing $<$ uniform sign flip $<$ edge-magnitude-aware) is the diagnostic signal, not absolute margin size.

\section{Discussion}
\label{sec:discussion}

\paragraph{Edge strength as a semantic signal.} The concentration mechanism generalizes beyond MNL: any item graph with choice-model weights~\cite{model/lcm4rec,model/ccf} inherits the property, and the diagnostic of \S\ref{subsec:diagnostic-protocol} applies without modification. A complementary route we do not evaluate, hard-negative sampling biased toward $C_t \cap N^{(K_{\mathrm{strong}})}_{\mathrm{sub}}(i)$, offers a less constrained channel for the same diagnosis: it operates at the loss rather than on the propagation graph, so the strong-edge competitors that positive smoothing pulls toward clicked items instead become explicit negatives the BPR gradient pushes away. The misalignment we diagnose becomes a useful sampling signal, and the diagnostic of \S\ref{subsec:diagnostic-protocol} identifies which substitution-graph neighborhoods are worth this complementary use.

\paragraph{Where improvement is, and is not, located in graph CF.} Our four operator variants span $0.005$--$0.011$ NDCG@10, while LightGCN $\to$ LightGCN++ moves NDCG@10 by $0.041$ on MIND and $0.017$ on EB-NeRD, an order of magnitude larger. Substantial item-graph design effort via knowledge graphs~\cite{model/wang2019kgat,model/wang2019kgnnls}, content~\cite{model/he2016vbpr,model/zhang2016cke}, social ties~\cite{model/fan2019graph}, and now choice modeling~\cite{model/lcm4rec,model/ccf} consistently moves headline metrics by less than $0.01$ on impression benchmarks. The picture aligns with prior re-evaluations~\cite{analysis/dacrema2019are,analysis/rendle2020neural,analysis/iana2024simplifying}: at matched compute, item-graph construction is not where the largest gains are found.

\section{Conclusion}
\label{sec:conclusion}

\paragraph{What we showed.}
Choice-derived item graphs satisfy an \emph{edge spectrum}: strong edges concentrate on the in-slate competitors of clicked items, while weak edges do not (Proposition~\ref{prop:sign}). Co-click graphs cannot exhibit the same misalignment by construction (Lemma~\ref{lemma:coclick}). The lift profile (Figure~\ref{fig:kspectrum}) confirms both predictions, with substitution's lift changing sign across cutoffs on EB-NeRD while co-click's stays flat.

\paragraph{What this implies for practice.}
The neighbor cutoff $k$ is a \emph{semantic switch}, not a sparsification knob. First, uniform loss-level fixes cannot repair what the graph encodes; edge-magnitude-aware constructions can. Second, the misalignment is diagnosable in advance: the protocol in \S\ref{subsec:diagnostic-protocol} returns a verdict from one MNL fit and a handful of inference passes. The same diagnostic applies to non-parametric models~\cite{model/lcm4rec}, richer utility functions~\cite{model/ccf}, KG-augmented variants~\cite{model/wang2019kgat}, and hard-negative sampling.

\appendix

\section{Parity Decomposition: Formal Details}
\label{app:parity}
Recall $A = \bigl(\begin{smallmatrix} 0 & S \\ S^\top & 0 \end{smallmatrix}\bigr)$. Direct computation gives $A^{2k} = \bigl(\begin{smallmatrix} (SS^\top)^k & 0 \\ 0 & (S^\top S)^k \end{smallmatrix}\bigr)$ and $A^{2k+1} = \bigl(\begin{smallmatrix} 0 & S(S^\top S)^k \\ S^\top (SS^\top)^k & 0 \end{smallmatrix}\bigr)$. Even powers stay on the same side of the bipartite graph; the item-side block of $A^{2k}$ is exactly $(S^\top S)^k$. The item-side $\ell$-hop propagated representation is $\Eitem^{(\ell)} = (S^\top S)^{\lfloor \ell / 2 \rfloor} (S^\top)^{\ell \bmod 2} \Xuser$. Our intervention replaces $(S^\top S)^{\lfloor \ell/2 \rfloor}$ with $C_I^{\lfloor \ell/2 \rfloor}$, leaving the cross-type factor unchanged. The final item embedding under the uniform layer-wise average becomes
\begin{equation}
\Eitemhat = \frac{1}{L+1} \sum_{k=0}^{\lfloor L/2 \rfloor} C_I^k \Xitem
+ \frac{1}{L+1} \sum_{k=0}^{\lfloor (L-1)/2 \rfloor} C_I^k S^\top \Xuser.
\end{equation}
Substituting $C_I = S^\top S$ recovers LightGCN's uniform-mixer propagation polynomial term by term at the operator level. The correspondence is about matched item-side operators, not end-to-end training equivalence; our $S^\top S$ variant differs numerically from locked LightGCN due to training-time framework choices around the operator, which we decompose in Appendix~\ref{app:framework-axis}.

\section{Framework-Axis Ablation: Decomposing the Gap to Locked LightGCN}
\label{app:framework-axis}

We decompose the $\approx 0.04$ NDCG@10 gap between our $S^\top S$ variant and locked LightGCN by re-running the $S^\top S$ operator with each of three operator-adjacent framework axes brought toward a LightGCN-compatible setting one at a time (Table~\ref{tab:framework-ablation}).
\vspace{-0.1cm}
\begin{table}[h]
\caption{Framework-axis ablation: $S^\top S$ operator under five configurations. Validation NDCG@10. Anchor cells (a) and (e) use 3 seeds; intermediate cells (b)--(d) use seed=1.}
\vspace{-0.1cm}
\label{tab:framework-ablation}
\small
\begin{tabular}{lccc}
\toprule
\textbf{Configuration} & \textbf{$R$} & \textbf{Pruning} & \textbf{Negatives} \\
\midrule
(a) Ours $S^\top S$              & Laplace & top-$K$=50 & in-slate \\
(b) Binary $R$ only              & binary  & top-$K$=50 & in-slate \\
(c) No pruning only              & Laplace & none       & in-slate \\
(d) Random negatives only        & Laplace & top-$K$=50 & random   \\
(e) All three combined           & binary  & none       & random   \\
\midrule
                                 & \textbf{MIND} & \multicolumn{2}{c}{\textbf{EB-NeRD}} \\
\cmidrule(lr){2-2}\cmidrule(lr){3-4}
(a) Ours $S^\top S$              & $0.3939 \pm 0.0023$ & \multicolumn{2}{c}{$0.5979 \pm 0.0015$} \\
(b) Binary $R$ only              & $0.3942$            & \multicolumn{2}{c}{$0.5989$}            \\
(c) No pruning only              & $0.3929$            & \multicolumn{2}{c}{$0.5981$}            \\
(d) Random negatives only        & $0.3638$            & \multicolumn{2}{c}{$0.5260$}            \\
(e) All three combined           & $0.3745 \pm 0.0107$ & \multicolumn{2}{c}{$0.5302 \pm 0.0020$} \\
\midrule
locked LightGCN baseline         & $0.4344$            & \multicolumn{2}{c}{$0.6358$}            \\
\bottomrule
\end{tabular}
\vspace{-0.2cm}
\end{table}

The decomposition is decisive but not in the expected direction. Binary $R$ and no pruning each change NDCG@10 by at most $0.001$: neither axis explains the gap. Random-negative BPR moves NDCG@10 by $-0.030$ on MIND and $-0.072$ on EB-NeRD, but in the \emph{wrong} direction: it widens the gap. The reason is that random-negative BPR is not in fact a LightGCN-vs-ours difference: the locked LightGCN configuration also uses within-impression sampling. After removing this confound, the remaining LightGCN-aligned axes (binary $R$ and no pruning) jointly explain less than $0.002$. The residual $\approx 0.04$ on both datasets is attributable to training-time hyperparameter and architecture choices outside the operator and item-graph axes: embedding dimension ($32$ vs $64$), training epochs ($5$ vs $20$), learning rate, and the layer-combination architecture itself. The framework-axis decomposition therefore localizes the gap entirely outside the item-side operator axis on which the four-operator comparison of Section~\ref{sec:main-comparison} is built.

\section{Mixer Collapse Diagnostics}
\label{app:mixer}
The softmax mixer parameterizes $\alpha_k = \exp(\theta_k) / \sum_{k'} \exp(\theta_{k'})$, trained jointly with the embeddings. We report the \emph{even-mass} $\sum_{k: \ell = 2k} \alpha_\ell$ and \emph{normalized entropy} $H_{\mathrm{norm}} = -\sum_k \alpha_k \log \alpha_k / \log(L+1)$.
\vspace{-0.1cm}
\begin{table}[h]
\caption{Learned mixer coefficients under the global softmax.}
\vspace{-0.1cm}
\label{tab:mixer-collapse}
\small
\begin{tabular}{lcccc}
\toprule
& \multicolumn{2}{c}{\textbf{MIND}} & \multicolumn{2}{c}{\textbf{EB-NeRD}} \\
\cmidrule(lr){2-3}\cmidrule(lr){4-5}
\textbf{Mixer}  & even-mass & $H_{\mathrm{norm}}$ & even-mass & $H_{\mathrm{norm}}$ \\
\midrule
softmax (learned) & 0.9999 & 0.0010 & 0.9996 & 0.0044 \\
uniform           & 0.5000 & 1.0000 & 0.5000 & 1.0000 \\
\bottomrule
\end{tabular}
\vspace{-0.2cm}
\end{table}

The converged coefficients for the substitution variant are
\begin{align*}
\alpha^{\text{MIND}}    &= [0.9999,\; 0.0000,\; 0.0001,\; 0.0000] \quad (L{=}4), \\
\alpha^{\text{EB-NeRD}} &= [0.9993,\; 0.0001,\; 0.0002,\; 0.0002,\; 0.0002] \quad (L{=}5),
\end{align*}
with the pattern holding across all four operator variants.
Imposing $\alpha_k \geq \alpha_{\min} \in \{0.25, 0.5\}$ over three seeds modestly improves dev NDCG@10 ($+0.009$ on MIND, $+0.006$ on EB-NeRD) but does not exceed the uniform-coefficient configuration, and the operator ordering of Table~\ref{tab:operator-ablation} is unchanged.

\section{Sign-Flip and Competition-Loss Full Sweep}
\label{app:remedies}
Table~\ref{tab:remedies-appendix} reports the full hyperparameter sweep for the two scalar interventions of Section~\ref{subsec:scalar-remedies}. Every margin-loss configuration trails co-click by at least $0.006$ NDCG@10 on both datasets, and across the sign-flip sweep only $\gamma{=}1.00$ on EB-NeRD marginally exceeds it.

\begin{table}[h]
\caption{Sign-flip and competition-loss sweeps. Internal dev split, NDCG@10 (mean $\pm$ std, 3 seeds). \textbf{No configuration improves over co-click on both datasets}: sign-flip $\gamma{=}1.0$ exceeds co-click on EB-NeRD but hurts on MIND; margin-loss cells trail co-click, confirming \S\ref{subsec:scalar-remedies}'s claim that uniform scalar interventions miss the misalignment.}
\label{tab:remedies-appendix}
\small
\setlength{\tabcolsep}{6pt}
\begin{tabular}{lcc}
\toprule
\textbf{Configuration}            & \textbf{MIND} & \textbf{EB-NeRD} \\
\midrule
\multicolumn{3}{l}{\emph{Sign-flipped substitution}} \\
$\gamma = 0.25$ & $0.4006 \pm 0.0004$ & $0.5970 \pm 0.0028$ \\
$\gamma = 0.50$ & $0.3999 \pm 0.0004$ & $0.5970 \pm 0.0023$ \\
$\gamma = 1.00$ & $0.3992 \pm 0.0007$ & $0.5991 \pm 0.0012$ \\
\midrule
\multicolumn{3}{l}{\emph{Competition-aware margin loss}} \\
$\lambda=0.1$, $m=0.0$ & $0.4027 \pm 0.0021$ & $0.5900 \pm 0.0012$ \\
$\lambda=0.1$, $m=0.1$ & $0.4012 \pm 0.0017$ & $0.5904 \pm 0.0035$ \\
$\lambda=0.3$, $m=0.0$ & $0.4006 \pm 0.0020$ & $0.5901 \pm 0.0004$ \\
$\lambda=0.3$, $m=0.1$ & $0.4004 \pm 0.0013$ & $0.5903 \pm 0.0010$ \\
$\lambda=1.0$, $m=0.0$ & $0.4005 \pm 0.0009$ & $0.5899 \pm 0.0022$ \\
$\lambda=1.0$, $m=0.1$ & $0.3998 \pm 0.0016$ & $0.5912 \pm 0.0007$ \\
\midrule
co-click reference (uniform mixer) & $\mathbf{0.4088 \pm 0.0011}$ & $\mathbf{0.5985 \pm 0.0010}$ \\
\bottomrule
\end{tabular}
\end{table}

\begin{acks}
This work was partly supported by JSPS KAKENHI Grant Number JP24K02942. 
\end{acks}

\section*{GenAI Usage Disclosure}
The authors used Claude (Anthropic) in two limited and clearly scoped ways. (i) Claude was used for language editing of author-written text, including grammar correction, spelling, and improving clarity of phrasing. (ii) Claude was used as a coding assistant to suggest boilerplate code, refactor utility functions, and help debug implementation details; all suggestions were reviewed, tested, and modified by the authors, and all experimental results were produced by code that the authors verified. No GenAI tool was used to generate research ideas, derive theoretical results, design experiments, or draft substantive content of the paper.

\bibliographystyle{ACM-Reference-Format}
\balance
\bibliography{sample-base}

@inproceedings{model/he2020lightgcn,
  title     = {{LightGCN}: Simplifying and Powering Graph Convolution Network for Recommendation},
  author    = {He, Xiangnan and Deng, Kuan and Wang, Xiang and Li, Yan and Zhang, Yongdong and Wang, Meng},
  booktitle = {Proceedings of the International ACM SIGIR Conference on Research and Development in Information Retrieval},
  year      = {2020},
  pages     = {639--648},
}

@inproceedings{model/lightgcnpp,
  title={Revisiting {LightGCN}: Unexpected Inflexibility, Inconsistency, and a Remedy towards Improved Recommendation},
  author={Lee, Geon and Kim, Kyungho and Shin, Kijung},
  booktitle={Proceedings of the ACM Conference on Recommender Systems},
  pages={957--962},
  year={2024}
}

@inproceedings{model/wang2019neural,
  title     = {Neural Graph Collaborative Filtering},
  author    = {Wang, Xiang and He, Xiangnan and Wang, Meng and Feng, Fuli and Chua, Tat-Seng},
  booktitle = {Proceedings of the International ACM SIGIR Conference on Research and Development in Information Retrieval},
  year      = {2019},
  pages     = {165--174},
}

@inproceedings{model/mao2021ultragcn,
  title     = {{UltraGCN}: Ultra Simplification of Graph Convolutional Networks for Recommendation},
  author    = {Mao, Kelong and Zhu, Jieming and Xiao, Xi and Lu, Biao and Wang, Zhaowei and He, Xiuqiang},
  booktitle = {Proceedings of the ACM International Conference on Information and Knowledge Management},
  year      = {2021},
  pages     = {1253--1262},
}

@inproceedings{model/yu2022simgcl,
  title     = {Are Graph Augmentations Necessary? Simple Graph Contrastive Learning for Recommendation},
  author    = {Yu, Junliang and Yin, Hongzhi and Xia, Xin and Chen, Tong and Cui, Lizhen and Nguyen, Quoc Viet Hung},
  booktitle = {Proceedings of the International ACM SIGIR Conference on Research and Development in Information Retrieval},
  year      = {2022},
  pages     = {1294--1303},
}

@inproceedings{model/shen2021gfcf,
  title     = {How Powerful is Graph Convolution for Recommendation?},
  author    = {Shen, Yifei and Wu, Yongji and Zhang, Yao and Shan, Caihua and Zhang, Jun and Letaief, Khaled B. and Li, Dongsheng},
  booktitle = {Proceedings of the ACM International Conference on Information and Knowledge Management},
  year      = {2021},
  pages     = {1619--1629},
}

@inproceedings{model/sarwar2001item,
  title     = {Item-Based Collaborative Filtering Recommendation Algorithms},
  author    = {Sarwar, Badrul and Karypis, George and Konstan, Joseph and Riedl, John},
  booktitle = {Proceedings of the International Conference on World Wide Web},
  year      = {2001},
  pages     = {285--295},
}

@inproceedings{model/ning2011slim,
  title     = {{SLIM}: Sparse Linear Methods for Top-{N} Recommender Systems},
  author    = {Ning, Xia and Karypis, George},
  booktitle = {Proceedings of the IEEE International Conference on Data Mining},
  year      = {2011},
  pages     = {497--506},
}

@inproceedings{model/steck2019ease,
  title     = {Embarrassingly Shallow Autoencoders for Sparse Data},
  author    = {Steck, Harald},
  booktitle = {Proceedings of the World Wide Web Conference},
  year      = {2019},
  pages     = {3251--3257},
}

@inproceedings{model/wu2021sgl,
  title     = {Self-supervised Graph Learning for Recommendation},
  author    = {Wu, Jiancan and Wang, Xiang and Feng, Fuli and He, Xiangnan and Chen, Liang and Lian, Jianxun and Xie, Xing},
  booktitle = {Proceedings of the International ACM SIGIR Conference on Research and Development in Information Retrieval},
  year      = {2021},
  pages     = {726--735},
}

@inproceedings{model/wang2019kgat,
  title     = {{KGAT}: Knowledge Graph Attention Network for Recommendation},
  author    = {Wang, Xiang and He, Xiangnan and Cao, Yixin and Liu, Meng and Chua, Tat-Seng},
  booktitle = {Proceedings of the ACM SIGKDD International Conference on Knowledge Discovery and Data Mining},
  year      = {2019},
  pages     = {950--958},
}

@inproceedings{model/wang2019kgnnls,
  title     = {Knowledge-aware Graph Neural Networks with Label Smoothness Regularization for Recommender Systems},
  author    = {Wang, Hongwei and Zhang, Fuzheng and Zhang, Mengdi and Leskovec, Jure and Zhao, Miao and Li, Wenjie and Wang, Zhongyuan},
  booktitle = {Proceedings of the ACM SIGKDD International Conference on Knowledge Discovery and Data Mining},
  year      = {2019},
  pages     = {968--977},
}

@inproceedings{model/he2016vbpr,
  title     = {{VBPR}: Visual Bayesian Personalized Ranking from Implicit Feedback},
  author    = {He, Ruining and McAuley, Julian},
  booktitle = {Proceedings of the AAAI Conference on Artificial Intelligence},
  year      = {2016},
  pages     = {144--150}
}

@inproceedings{model/zhang2016cke,
  title     = {Collaborative Knowledge Base Embedding for Recommender Systems},
  author    = {Zhang, Fuzheng and Yuan, Nicholas Jing and Lian, Defu and Xie, Xing and Ma, Wei-Ying},
  booktitle = {Proceedings of the ACM SIGKDD International Conference on Knowledge Discovery and Data Mining},
  year      = {2016},
  pages     = {353--362},
}

@inproceedings{model/fan2019graph,
  title     = {Graph Neural Networks for Social Recommendation},
  author    = {Fan, Wenqi and Ma, Yao and Li, Qing and He, Yuan and Zhao, Eric and Tang, Jiliang and Yin, Dawei},
  booktitle = {Proceedings of the World Wide Web Conference},
  year      = {2019},
  pages     = {417--426},
}

@inproceedings{model/derr2018signed,
  title     = {Signed Graph Convolutional Networks},
  author    = {Derr, Tyler and Ma, Yao and Tang, Jiliang},
  booktitle = {Proceedings of the IEEE International Conference on Data Mining},
  year      = {2018},
  pages     = {929--934},
}

@inproceedings{model/huang2019signed,
  title     = {Signed Graph Attention Networks},
  author    = {Huang, Junjie and Shen, Huawei and Hou, Liang and Cheng, Xueqi},
  booktitle = {Proceedings of the International Conference on Artificial Neural Networks},
  year      = {2019},
  pages     = {566--577}
}

@inproceedings{model/lcm4rec,
  title={A Non-Parametric Choice Model That Learns How Users Choose Between Recommended Options},
  author={Krause, Thorsten and Oosterhuis, Harrie},
  booktitle={Proceedings of the ACM Conference on Recommender Systems},
  pages={21--30},
  year={2025},
}

@article{model/ccf,
  title         = {Local Optimality of User Choices and Collaborative Competitive Filtering},
  author        = {Yang, Shuang Hong},
  journal       = {arXiv preprint arXiv:1010.0621},
  year          = {2010}
}

@article{method/aouad2023assortment,
  title={The Click-Based {MNL} Model: A Framework for Modeling Click Data in Assortment Optimization},
  author={Aouad, Ali and Feldman, Jacob and Segev, Danny and Zhang, Dennis J},
  journal={Management Science},
  volume={71},
  number={8},
  pages={6943--6960},
  year={2025},
}

@article{model/kvernadze2022two,
  title   = {Two Is Better Than One: Dual Embeddings for Complementary Product Recommendations},
  author  = {Kvernadze, Giorgi and Sudyanti, Putu Ayu G. and Subedi, Nishan and Hajiaghayi, Mohammad},
  journal = {arXiv preprint arXiv:2211.14982},
  year    = {2022}
}

@inproceedings{model/kang2018self,
  title     = {Self-Attentive Sequential Recommendation},
  author    = {Kang, Wang-Cheng and McAuley, Julian},
  booktitle = {Proceedings of the IEEE International Conference on Data Mining},
  year      = {2018},
  pages     = {197--206},
}

@incollection{theory/mcfadden1973conditional,
  title     = {Conditional Logit Analysis of Qualitative Choice Behavior},
  author    = {McFadden, Daniel},
  booktitle = {Frontiers in Econometrics},
  editor    = {Zarembka, Paul},
  publisher = {Academic Press},
  year      = {1973},
  pages     = {105--142}
}

@book{book/train2009discrete,
  title     = {Discrete Choice Methods with Simulation},
  author    = {Train, Kenneth E.},
  publisher = {Cambridge University Press},
  year      = {2009},
  edition   = {2nd}
}

@article{theory/berry1994estimating,
  title     = {Estimating Discrete-Choice Models of Product Differentiation},
  author    = {Berry, Steven T.},
  journal   = {The {RAND} Journal of Economics},
  volume    = {25},
  number    = {2},
  pages     = {242--262},
  year      = {1994}
}

@inproceedings{method/pang2020setrank,
  title     = {{SetRank}: Learning a Permutation-Invariant Ranking Model for Information Retrieval},
  author    = {Pang, Liang and Xu, Jun and Ai, Qingyao and Lan, Yanyan and Cheng, Xueqi and Wen, Jirong},
  booktitle = {Proceedings of the International ACM SIGIR Conference on Research and Development in Information Retrieval},
  year      = {2020},
  pages     = {499--508},
}

@article{method/sunehag2015slate,
  title={Deep Reinforcement Learning with Attention for Slate {M}arkov Decision Processes with High-Dimensional States and Actions},
  author={Sunehag, Peter and Evans, Richard and Dulac-Arnold, Gabriel and Zwols, Yori and Visentin, Daniel and Coppin, Ben},
  journal={arXiv preprint arXiv:1512.01124},
  year={2015}
}

@inproceedings{method/jiang2019listcvae,
  title={Beyond Greedy Ranking: Slate Optimization via List-{CVAE}},
  author={Jiang, Ray and Gowal, Sven and Qian, Yuqiu and Mann, Timothy and Rezende, Danilo J},
  booktitle={International Conference on Learning Representations},
  year      = {2019}
}

@inproceedings{analysis/dacrema2019are,
  title     = {Are We Really Making Much Progress? A Worrying Analysis of Recent Neural Recommendation Approaches},
  author    = {Ferrari Dacrema, Maurizio and Cremonesi, Paolo and Jannach, Dietmar},
  booktitle = {Proceedings of the ACM Conference on Recommender Systems},
  year      = {2019},
  pages     = {101--109},
}

@inproceedings{analysis/rendle2020neural,
  title     = {Neural Collaborative Filtering vs. Matrix Factorization Revisited},
  author    = {Rendle, Steffen and Krichene, Walid and Zhang, Li and Anderson, John},
  booktitle = {Proceedings of the ACM Conference on Recommender Systems},
  year      = {2020},
  pages     = {240--248},
}

@inproceedings{analysis/sato2022reevaluating,
  title     = {Re-evaluating Word Mover's Distance},
  author    = {Sato, Ryoma and Yamada, Makoto and Kashima, Hisashi},
  booktitle = {Proceedings of the International Conference on Machine Learning},
  year      = {2022},
  pages     = {19231--19249}
}

@inproceedings{analysis/iana2024simplifying,
  title     = {Simplifying Content-Based Neural News Recommendation: On User Modeling and Training Objectives},
  author    = {Iana, Andreea and Glava{\v{s}}, Goran and Paulheim, Heiko},
  booktitle = {Proceedings of the International ACM SIGIR Conference on Research and Development in Information Retrieval},
  year      = {2023},
  pages     = {2384--2388},
}

@inproceedings{dataset/wu2020mind,
  title     = {{MIND}: A Large-scale Dataset for News Recommendation},
  author    = {Wu, Fangzhao and Qiao, Ying and Chen, Jiun-Hung and Wu, Chuhan and Qi, Tao and Lian, Jianxun and Liu, Danyang and Xie, Xing and Gao, Jianfeng and Wu, Winnie and Zhou, Ming},
  booktitle = {Proceedings of the Annual Meeting of the Association for Computational Linguistics},
  year      = {2020},
  pages     = {3597--3606},
}

@inproceedings{dataset/kruse2024ebnerd,
  title={EB-NeRD a large-scale dataset for news recommendation},
  author={Kruse, Johannes and Lindskow, Kasper and Kalloori, Saikishore and Polignano, Marco and Pomo, Claudio and Srivastava, Abhishek and Uppal, Anshuk and Andersen, Michael Riis and Frellsen, Jes},
  booktitle={Proceedings of the Recommender Systems Challenge 2024},
  pages={1--11},
  year={2024}
}

@inproceedings{model/wu2019nrms,
  title     = {Neural News Recommendation with Multi-Head Self-Attention},
  author    = {Wu, Chuhan and Wu, Fangzhao and Ge, Suyu and Qi, Tao and Huang, Yongfeng and Xie, Xing},
  booktitle = {Proceedings of the Conference on Empirical Methods in Natural Language Processing},
  year      = {2019},
  pages     = {6389--6394},
}

@inproceedings{model/an2019lstur,
  title     = {Neural News Recommendation with Long- and Short-term User Representations},
  author    = {An, Mingxiao and Wu, Fangzhao and Wu, Chuhan and Zhang, Kun and Liu, Zheng and Xie, Xing},
  booktitle = {Proceedings of the Annual Meeting of the Association for Computational Linguistics},
  year      = {2019},
  pages     = {336--345},
}

@inproceedings{model/wu2019naml,
  title     = {Neural News Recommendation with Attentive Multi-View Learning},
  author    = {Wu, Chuhan and Wu, Fangzhao and An, Mingxiao and Huang, Jianqiang and Huang, Yongfeng and Xie, Xing},
  booktitle = {Proceedings of the International Joint Conference on Artificial Intelligence},
  year      = {2019},
  pages     = {3863--3869},
}

@inproceedings{model/wu2019npa,
  title     = {{NPA}: Neural News Recommendation with Personalized Attention},
  author    = {Wu, Chuhan and Wu, Fangzhao and An, Mingxiao and Huang, Jianqiang and Huang, Yongfeng and Xie, Xing},
  booktitle = {Proceedings of the ACM SIGKDD International Conference on Knowledge Discovery and Data Mining},
  year      = {2019},
  pages     = {2576--2584},
}

@inproceedings{model/li2022miner,
  title     = {{MINER}: Multi-Interest Matching Network for News Recommendation},
  author    = {Li, Jian and Zhu, Jieming and Bi, Qiwei and Cai, Guohao and Shang, Lifeng and Dong, Zhenhua and Jiang, Xin and Liu, Qun},
  booktitle = {Findings of the Association for Computational Linguistics: ACL 2022},
  year      = {2022},
  pages     = {343--352},
}

@inproceedings{model/qi2022fum,
  title     = {{FUM}: Fine-grained and Fast User Modeling for News Recommendation},
  author    = {Qi, Tao and Wu, Fangzhao and Wu, Chuhan and Huang, Yongfeng},
  booktitle = {Proceedings of the International ACM SIGIR Conference on Research and Development in Information Retrieval},
  year      = {2022},
  pages     = {1974--1978},
}

\end{document}